\def\figpath{./}
\newif\ifjournal
\newif\ifappendixneeded
\newif\ifuncertain
\theoremstyle{definition}
\newtheorem{lemma}{Lemma}
\newtheorem{theorem}{Theorem}
\newtheorem{problem}{Problem}
\newtheorem{definition}{Definition}
\newcommand\ceiling[1]{\left\lceil #1 \right\rceil}
\title{DeepPR: Progressive Recovery for Interdependent VNFs with Deep Reinforcement Learning}
\author{Genya Ishigaki, \IEEEmembership{Student Member, IEEE,} Siddartha Devic, Riti Gour, \IEEEmembership{Student Member, IEEE,} and~Jason~P.~Jue,~\IEEEmembership{Senior~Member,~IEEE}
\thanks{Manuscript submitted \today.} 
\thanks{Genya Ishigaki, Siddartha Devic, Riti Gour, and Jason P. Jue are with the Department of Computer Science at The University of Texas at Dallas, Richardson Texas 75080, USA (Email: \{gishigaki, sid.devic, rgour, jjue\}@utdallas.edu).

%Riti Gour is with the Department of Telecommunications Engineering at The University of Texas at Dallas, Richardson Texas 75080, USA (Email: \href{rgour@utdallas.edu}{rgour@utdallas.edu}).

%Jason P. Jue is with the Department of Computer Science at the University of Texas at Dallas, Richardson Texas 75080, USA (Email: \href{jjue@utdallas.edu}{jjue@utdallas.edu}).
An earlier version of this paper \cite{my_globecom} will be presented at IEEE Global Communications Conference (GLOBECOM) 2019.
}
%\thanks{} 
  
  }
\begin{document}

\maketitle
\IEEEpeerreviewmaketitle

%==============================================================
\begin{abstract}
%background - everyone understands: Everyone agrees that this issue is really important.
The increasing reliance upon cloud services entails more flexible networks that are realized by virtualized network equipment and functions. When such advanced network systems face a massive failure by natural disasters or attacks, the recovery of the entire system may be conducted in a progressive way due to limited repair resources.
%detailed background - all the researchers in your field understand
The prioritization of network equipment in the recovery phase influences the interim computation and communication capability of systems, since the systems are operated under partial functionality.
%the "problem" in this paper: We do not know much about this specific question, although it matters a great deal, for these reasons. 
Hence, finding the best recovery order is a critical problem, which is further complicated by virtualization due to dependency among network nodes and layers.
%summary of this paper: We approach the problem from this perspective. Our research design focuses on these cases and relies on these data, which we analyze using this method.
This paper deals with a progressive recovery problem under limited resources in networks with VNFs, where some dependent network layers exist. We prove the NP-hardness of the progressive recovery problem and approach the optimum solution by introducing DeepPR, a progressive recovery technique based on Deep Reinforcement Learning (Deep RL). 
%direct results (using related works sometimes): Results show what we have learned about the question. 
Our simulation results indicate that DeepPR can achieve the near-optimal solutions in certain networks and is more robust to adversarial failures, compared to a baseline heuristic algorithm. 
%Also, the results empirically indicated that DeepPR can obtain 98.4\% of the theoretical optimum in certain networks.
%summary - show the meaning of the results in wider contexts: They have these broader implications.

%Our results suggest a general approach to effectively apply Deep RL to the optimization problems for which there exist well-behaving heuristic algorithms.
\end{abstract}
\begin{IEEEkeywords}
resource allocation, Deep Reinforcement Learning (Deep RL), network recovery, Network Function Virtualization (NFV), interdependent networks.
\end{IEEEkeywords}

%==============================================================
\section{Introduction}
%1. A system-level function provided by collaborative functionality of multiple nodes or VNFs \cite{7849149}

%However, an NFV-based data plane must compensate for the potential lower reliability of commodity hardware [11]. In addition, the presence of multiple layers of software including hy- pervisors or container libraries, guest OSes, system and application software, increase the chance of software failures.

Resilience is a critical concern for communication networks that are deployed in support of cloud systems. However, the recent trend towards the virtualization of network equipment and functions potentially introduces new fragility into such systems due to layering \cite{coNEXT18_failure}.

%middleboxes fail because of hardware\cite{sigcom_middlebox_failure}

Many studies reveal the fragility that is unique in layered networks \cite{7367911,7498079}. For example, a network system may be realized by the combination of virtualized functions and infrastructure (physical) nodes. The nodes in the infrastructure layer host some functions including the orchestrator function that manages the life cycle of virtualized functions and the mapping between the two layers \cite{ETSIstandard}. The functionality of the orchestrator function depends on the infrastructure node hosting it; at the same time, it is necessary for an infrastructure node to be reachable to a working virtualized orchestrator that manages the physical computation resources on every infrastructure node. This interdependency between two layers results in increased fragility.
%VNF Orchestrator ETSI ETSI GS NFV-MAN 001 V1.1.1 (2014-12)

%2. Limitation of available resources right after a massive disaster, and Prioritization of critical services with the utmost urgency

%The dissociation of logic and functions, which is effective for system flexibility, has accelerated the amount of layering and obscure dependencies in network systems. The work \cite{7367911} on software defined optical networks points out the dependency of logical nodes on physical nodes that provide physical paths for connections among logical nodes, as well as the dependency of physical nodes on the logical nodes \textcolor{black}{through SDN control messages, which define the operations} of the physical nodes. Similarly, it is revealed that NFV embraces the interdependency between Virtual Network Functions (VNF) and physical servers hosting the VNFs, when a virtualization orchestrator is recognized as one of the VNFs \cite{7498079}. This tendency of layering and collaborative functionality of layered networks is likely to be more evident for next-generation network systems.

Furthermore, the interdependency has an influence on recovery decisions after a massive failure. After massive failures, it is critical to start providing necessary connections and services as soon as possible, even when available resources, such as manpower or backup equipment, to repair the system is limited. The prioritization of specific connections or services is well-studied in \cite{5934996,7440669,8057042} for single layer networks. However, this prioritization becomes more complex when there is interdependency between layers, since the role of each node is determined not only by the topology of a network but also by the interdependency \cite{7842042,NaturecommMajdandzic}. The following example characterizes the inherent complexity of the problem.
% PR in single layer

%Note that an integer in each cell represents the total computation power (utility) available at the time step, and nodes in brackets are the ones recovered at the specific time step. 
%Node $x$ is a virtualized orchestrator function required for the physical servers $v_i$ in $G_1$ to operate correctly.
%\subsection{Motivating Example}
Let us consider an example illustrated in Figure \ref{motivating}. The network consists of two constituent layers, which represent a virtualized function layer $G_0$ and an infrastructure layer $G_1$. Each server $v_i$ on $G_1$ can host one function $f_i$. Suppose that either $v_1$ or $v_2$ hosts a virtualized orchestration function among the four servers; i.e. $f_1$ or $f_2$ can be an orchestration function. As explained above, at least one orchestration function needs to be available for servers to be functional. The demand of each server shows the amount of resources needed to repair it.

%-------------------------------------------------------------------------------------
\def\scale{0.48}
\begin{figure}[t]
\centering
\includegraphics[width=\scale\textwidth]{\figpath 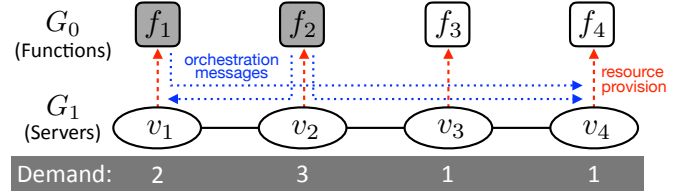}
\caption{A Motivating Example: Every physical node requires a connection to at least one control node in the function layer to receive orchestra-ion messages. Also, each function node needs to be provided with computation resources by the physical node hosting it.}
\label{motivating}
\end{figure}
%-------------------------------------------------------------------------------------
%================================
\begin{table}[t]
\centering
\caption{Available Computation Power at Each Time Step: The difference in the recovery order causes loss of potential cumulative interim computation power.}
{\renewcommand\arraystretch{1.5}
\begin{tabular}{c||c|c|c|c|c|c|c}
\hline
& $t_1$ & $t_2$ & $t_3$ & $t_4$ & $t_5$ & $t_6$ & $t_7$\\
\hline
%$P_1$&0&1 $(v_1)$&1 $(v_1)$ & 1 $(v_1)$ &2 $(v_1, v_2)$ & 3 $(v_1, v_2, v_3)$ & 4 $(v_1, v_2, v_3, v_4)$\\
$P_1$&0&1 $(v_1)$&1 & 1 &2 $(v_2)$ & 3 $(v_3)$ & 4 $(v_4)$\\
%\hline
%$P_2$&0&0&1 $(v_3)$&1& 2 $(v_1)$& 3 $(v_3)$& 4 $(v_4)$\\
%\hline
$P_2$&0&0&0&0&3 $(v_2, v_3, v_4)$& 3 & 4 $(v_1)$\\
\hline
\end{tabular}
}
\label{samplerecv}
\end{table}
%================================

Our problem is to determine the recovery order of the servers, considering the number of functions available during the recovery process. Here, the following two recovery orders are compared in terms of the total number of functions available over recovery time steps: $P_1: v_1 \rightarrow v_2 \rightarrow v_3 \rightarrow v_4$ and $P_2: v_4 \rightarrow v_3 \rightarrow v_2 \rightarrow v_1$. For simplicity, it is assumed that only one unit of resource is available at each time step $t_i\ (1 \leq i \leq 7)$.

Table \ref{samplerecv} describes the number of available functions at each time step when following each recovery order. Note that an integer in each cell represents the the number of functions available (utility) at the time step. For instance, in $P_1$, we first recover $v_1$ and obtain 1 available function (utility) at $t_2$, since it takes two steps to satisfy the demand of the node. A recovered node stays functional until the last step $t_7$ and continues providing the same utility at every step after the step in which it was recovered. Therefore, the computation capability at $t_3$ and $t_4$ is 1, as there are no other nodes recovered during these steps. Since $v_2$ is recovered after three steps, another unit of utility is added at $t_5$. In $P_2$, the interdependency between the virtualized function layer and the infrastructure layer plays an interesting role in the recovery process. Even though sufficient resources are assigned to $v_4$ and $v_3$ in the first two steps, the utility remains 0 until $v_2$ is recovered. This is because the two nodes ($v_3, v_4$) cannot receive the orchestration messages due to the unreachability to $f_2$, which is an orchestration function. Hence, the total utility jumps to 3, once $v_2$ is recovered at $t_5$. As a result, the total utility over time of $P_1$ is 12, while the total utility of $P_2$ is 10.

Hence, the total utility available during recovery is different depending on which recovery order we adopt. Motivated by this simple example, the question addressed in this paper is the following. \textit{How do we find a recovery order that maximizes the accumulated utility during the recovery process in networks with interdependency between layers?} This problem is a variant of the progressive recovery problem \cite{5934996}, which aims at maximizing the amount of flows going through a network during the recovery process. However, the fundamental difference lies in the consideration of the interconnectedness between nodes in different layers.

%In order to answer this question, we first prove that the progressive recovery problem with a general graph always has an equivalent progressive recovery problem with a simpler graph. Additionally, the NP-hardness of the simpler problem is shown, which implies that the general case cannot be solved in polynomial time. We then propose a Deep reinforcement learning-based algorithm for Progressive Recovery (DeepPR) to solve the problem. Deep reinforcement learning seems suitable for this problem, because it is easy to calculate the total available computation capability when given a recovery order, as seen in the previous example, even though determining the optimum order is intractable.  

Our major contribution is twofold: a set of theoretical results, which narrow down decision-making factors in the recovery, and a Deep Reinforcement Learning-based (Deep RL) algorithm to decide the recovery order. Being combined with the theoretical results that provide guidelines on the selection of a recovery order, the Deep RL technique demonstrates its performance as a general method to solve the recovery problem, which answers the research question above.
%Deep reinforcement learning seems suitable for this problem, because it is easy to calculate the total available computation capability when given a recovery order, as seen in the previous example, even though determining the optimum order is intractable.  
The following are the key contributions of the rest of this paper.
\begin{itemize}
\item To the best of our knowledge, this is the first paper defines the progressive recovery problem in networks with layer dependency.
\item We prove that the progressive recovery problem with a general graph always has an equivalent progressive recovery problem with a simpler graph (Section \ref{problem_conversion} - Theorem \ref{general_are_onelayer}).
\item The NP-hardness of the simpler problem is shown, which implies that the general case cannot be solved in polynomial time (Section \ref{problem_section} - Theorem \ref{general_star_NP}).
\item A heuristic algorithm (RATIO) is proposed, and its limitation is described by introducing an adversarial scenario (Section \ref{ratio_section}).
\item A Deep reinforcement learning-based algorithm for Progressive Recovery (DeepPR) is proposed, integrating the RATIO heuristic, to deal with the limitation of the heuristic (Section \ref{DeepPR_section}). 
%Deep reinforcement learning seems suitable for this problem, because it is easy to calculate the total available computation capability when given a recovery order, as seen in the previous example, even though determining the optimum order is intractable.
\item Our simulation results indicate that DeepPR could achieve near-optimal solutions and is robust against the adversity, using the exploration with RATIO.
\item Our results suggest that the integration of reinforcement learning and a heuristic algorithm that is specifically designed for an optimization problem provides a mean to solve optimization problems more effectively than simple use of reinforcement learning or heuristic algorithms (Section \ref{disc_section}).
\end{itemize}

Note that an earlier version of this paper will be presented at IEEE Global Communications Conference (GLOBECOM) 2019 \cite{my_globecom}. This journal paper provides a complete view of the theoretical results partially discussed in the conference version, and several new discussions and evaluations related to adversarial failure scenarios.

%Our main contributions are summarized as follows.
%\begin{itemize}
%\item NP-hardness of the progressive recovery problem. a variant of the knapsack problem.
%\item characterize the optimum progressive recovery plan for some special cases.
%\item proved the equivalence of the general case appearing on NV and one of the special case.
%\item Deep RL algorithm to solve it. compared with the results from opt exponential algorithm and heuristics that perform well in smaller graphs.
%\end{itemize}

%==============================================================
\section{Related Works}
%Progressive Recovery in one layer graph
Pioneering work \cite{5934996} on the progressive recovery problem focuses on determining the recovery order of communication links that maximizes the amount of flows on the recovered network with limited resources. As an extension, the work \cite{7440669} proposes node evaluation indices to decide the recovery order to maximize the number of virtual networks accommodated. Considering the necessity of monitoring to observe failure situations, the joint problem of progressive recovery and monitor placement is discussed in \cite{8057042}.
%ICNC virtual network prog recv\cite{7440669} evaluate nodes
%infocom11\cite{5934996} - original infocom
%infcom17 \cite{8057042} - flow maxmization uncertinty 

%\ifjournal
%recovery datacenter \cite{7470834} (reachability to contents)
%prediction of cascade by learning - \cite{8416300}
%TCOM opt algo for convex \cite{8382246} analytical solution
%\fi
%
%\ifjournal
%%Interdependent networks - concepts
%Buldyrev et al. in \cite{Buldyrev2010}
%\fi

The fragility induced by dependency between network layers has been pointed out in the context of interdependent network research \cite{7367911,7498079,generalsurvey_cascading}. In particular, the interdependency between virtualized nodes and physical nodes in optical networks is considered in \cite{7367911}. A similar dependency caused by VNF orchestration is discussed in \cite{7498079}.

%The work \cite{7367911} on software defined optical networks points out the dependency of logical nodes on physical nodes that provide physical paths for connections among logical nodes, as well as the dependency of physical nodes on the logical nodes through SDN control messages, which define the operations of the physical nodes. Similarly, it is revealed that NFV embraces the interdependency between Virtual Network Functions (VNF) and physical servers hosting the VNFs, when a virtualization orchestrator is recognized as one of the VNFs \cite{7498079}

%it has been revealed that certain types of dependencies between different layers of networks can deteriorate the robustness of the entire interdependent system \cite{generalsurvey_cascading}

% Interdependent networks - recovery and healing
The works in \cite{STIPPINGER2014481, PhysRevE.92.052806, 7913673} analyze the behaviors of failure propagations in such interdependent networks when each node performs local recovery (healing), where a functioning node substitutes for the failed node by establishing new connections with its neighbors. 

%The speed of further cascades and resulting network states are revealed by percolation theory \cite{STIPPINGER2014481, PhysRevE.92.052806} or steady state analysis in the belief propagation algorithm \cite{7913673}.

%Progressive recovery in interdependent networks
Progressive recovery problems in interdependent networks have been discussed in \cite{interdepprogrecov_ealier,7842042, NaturecommMajdandzic,4343992}. Classifying the progressive recovery problems by the types of interdependency, the work \cite{interdepprogrecov_ealier} proposes the optimum algorithm for a special case and heuristic algorithms for other cases. ILP and Dynamic Programming-based algorithms are employed to solve a variant of the progressive recovery problem in \cite{7842042}. 

\ifjournal
Other works \cite{6849338,6407159} propose some metrics to evaluate network nodes that can be used to decide the priority among the nodes.
\fi

%Sen \cite{interdepprogrecov_ealier}
%Also, resource allocation problems, which consider the different roles of network nodes are discussed in \cite{7842042, NaturecommMajdandzic}. 

%The works in \cite{7842042} ILP, DP based algo - need to check contents

%while an equivalent problem in the phase diagram is discussed in \cite{NaturecommMajdandzic}.
%
%interdependent in infrastructures \cite{4343992}

% \cite{6849338}, similar in smart grid \cite{6407159}

% Enhancing \cite{STIPPINGER2014481}
%simple and effecive\cite{PhysRevE.92.052806}
%Nature comm\cite{NaturecommMajdandzic}
%TCOM car error code\cite{7913673}

%==============================================================
\section{Model}
\subsection{Network Model}
A network, which consists of virtulized functions and infrastructure nodes hosting the functions, is modeled by an interdependent network that is formed by two constituent graphs $G_i = (V_i, E_{ii})\ (i \in \{0, 1\})$, which correspond to the virtualized orchestration function layer $(G_0)$ and the infrastructure node layer $(G_1)$. A pair of nodes in different constituent graphs can be connected by an arc representing their dependency relationships: $A_{ij}\ (i, j \in \{0, 1\},\ i \neq j)$. Edges in $E_{ii} \subseteq V_i \times V_i$ are called \textit{intra-}edges because they connect pairs of nodes in a constituent network. In contrast, arcs in $A_{ij} \subseteq V_i \times V_j\ (i \neq j)$ are called \textit{inter-} or \textit{dependency} arcs. An arc $(v_i, v_j) \in A_{ij}\ (v_i \in V_i, v_j \in V_j)$ indicates that a node $v_j$ has dependency on a node $v_i$. The node $v_i$ is called a \textit{supporting} node, and $v_j$ is a supported node. 

%An interdependent network $N$ consists of two directed constituent graphs $G_i = (V_i, A_{ii})\ (i \in \{0, 1\})$ and their dependency relationships, which are defined by sets of directed arcs $A_{ij}\ (i, j \in \{0, 1\},\ i \neq j)$ representing the provisioning between a pair of nodes in different graphs. Arcs in $A_{ii} \subseteq V_i \times V_i$ are called \textit{intra-}arcs because they connect pairs of nodes in a constituent network. In contrast, arcs in $A_{ij} \subseteq V_i \times V_j\ (i \neq j)$ are called \textit{inter-} or \textit{dependency} arcs. An arc $(v_i, v_j) \in A_{ij}\ (v_i \in V_i, v_j \in V_j)$ indicates that a node $v_j$ has dependency on a node $v_i$. The node $v_i$ is called a \textit{supporting} node, and $v_j$ is a supported node. 

Two node attribute functions are defined to capture the characteristics of each node: \textit{demand} and \textit{utility} functions. The demand function $d: V \rightarrow \mathbb{N}$ represents how many resources needs to be assigned to fully recover a given node. This demand can be interpreted as the cost or manpower to repair a specific node in the context of recovery problems. The utility function $u: V \rightarrow \mathbb{N}$ indicates the computational capability of a given node, such as the number of functions it can host, when it is fully recovered.

\subsection{Network Failure and Progressive Recovery Plan}
When a network failure event occurrs at time $t_0$, some nodes in the network become nonfunctional. Let $F[t_k] \subseteq V (= \bigcup_{i \in \{0,1\}} V_i)$ denote a set of nonfunctional nodes at time $t_k$. With this notation, the nonfunctional nodes right after the failure are represented as $F[t_0]$. A failure is represented by a node set in this paper, because any failure of an edge can be converted to a node failure by replacing the nonfunctional edge $(v_i, v_j) \in E$ with a nonfunctional node $v_{ij}$ and two functional edges $\{(v_i, v_{ij}), (v_{ij}, v_j)\}$. 

In progressive recovery scenarios, we receive a limited amount of resources at each time step after a failure. The resource function $r: t_i \mapsto c_i \in \mathbb{N}$ indicates the amount of the repair resources available at time $t_i\ (i \in \{0, ..., T\} \subset \mathbb{N})$.

A progressive recovery plan $P$ is an assignment of the available resources to the nonfunctional nodes. Formally, $P$ is a $(T+1) \times |V|$ matrix whose entries indicate the amount of resources assigned to a specific node at a specific time. Because of the limitation on the available resource amount, $P[t_i] \left(:= \sum_{v \in V} P[t_i][v]\right) = r(t_i)$ for every $t_i$.

\def\Satu{K}

During the recovery process, nodes can be classified by two measures: the amount of resources assigned to the node and the functionality of the node. A node $v$ is \textit{saturated} when it has received enough recovery resources: $d(v) \leq \sum_{i = 0}^{k} P[t_i][v]$. Let $\Satu[t_i]$ denote a set of saturated nodes at time $t_i$. A node $v$ is said to be \textit{functional} if and only if it is (1) saturated and (2) reachable from at least one saturated supporting node in the other constituent graph via a simple path consisting of functional nodes. When a node $v$ is functional at time $t_i \ (i \in \{0, ..., T\} \subset \mathbb{N})$, the node state function $\alpha_i(v) = 1$; otherwise $0$. A node $v$ is \textit{recovered} at $t_i$ only when it becomes functional by assigning $P[t_i][v]$. In real networks, a nonfunctional saturated node can be interpreted as either an infrastructure node unreachable from an orchestration function or a virtualized function that is hosted on an infrastructure node that is nonfunctional.

A resource assignment $P[t_i]$ at each step $t_i$ is called a \textit{splitting} assignment when it prevents any nodes from saturation or recovery, even though there exists a node that can be saturated or recovered at $t_i$. Contrarily, a \textit{concentrating} assignment saturates or recovers some node if possible, and provides all the extra resources, which cannot saturate nor recover any node, to one unsaturated node.

\ifuncertain
\subsection{Certainty vs Uncertainty Model}
In a certainty model, we assume the access to the states $\alpha$ of all nodes. This means we always know which nodes are functional or nonfunctional through an imaginary network-wide view.

However, the monitoring capability for the network-wide view may be too optimistic in case of massive system failures. Therefore, an uncertainty model restricts the access to node states $\alpha$ to local view. More precisely, we can know whether or not a node is functional only when the node is adjacent to functional nodes. This will be interpreted as the monitoring function attached to each node becomes functional, when the associated node is fully recovered.
\fi

%In order to emphasize the dependency between constituent graphs, an interdependent network can be represented as a single-layer directed graph $G = (V, A)$, where $V := \bigcup_{i} V_i$, and $A := \bigcup_{\{(i, j) \mid i \neq j\}} A_{ij}$ by abbreviating intra-edges. With this notation, a node $v$ is said to be \textit{functional} if and only if $\indeg(v) \geq 1$. Note that all the discussions in the rest of this paper follow this single-layer graph representation.

\def\prob{\mathbb{P}}
%==============================================================
\section{Problem Formulation \label{problem_section}}
This section formulates the progressive recovery problem in interdependent networks, and discusses and proves some properties of the problem. 

\subsection{The Problem and Special Cases}
The progressive recovery problem is to find a recovery plan $P$ represented by a (time step $\times$ node)-matrix that maximizes the sum of utility provided by functional nodes during the recovery.

\begin{problem} \textbf{Progressive Recovery Problem (PR)}: 
Given a graph $N = (V = V_0 \cup V_1, A = A_{01} \cup A_{10} \cup E_{00} \cup E_{11})$, a demand function $d$, an utility function $u$, a set of initially failed nodes $F[t_0] \subseteq V$, and a resource function $r$, maximize the network-wide utility $U_P = \sum_{i = 0}^{T} \sum_{v \in V} u(v) \alpha_i(v)$ by deciding a resource assignment matrix $P$.
\end{problem}

%\begin{problem} \textbf{Progressive Recovery Problem (PR)}: \\
%Given:
%\begin{itemize}
%\item A graph $N$
%\item A demand function $d$
%\item An utility function $u$
%\item A set of initially failed nodes $F[t_0] \subseteq V$
%\item A resource function $r$
%\end{itemize}
%Maximize the network-wide utility 
%\begin{align}
%U_P = \sum_{i = 0}^{T} \sum_{v \in V} u(v) \alpha_i(v)
%\end{align}
%by deciding a (time step $\times$ node)-matrix $P$ indicating resource assignments\\
%\vspace{1em}
%Subject to:
%\begin{itemize}
%\item $P[t_i] \left(:= \sum_{v \in V} P[t_i][v]\right) = r(t_i)$
%\item When a node $v$ is functional at time $t_i$, the node state function $\alpha_i(v) = 1$; otherwise $0$
%\end{itemize}
%\end{problem}

%%%%%%%%%%%%%%For clear representation of special cases, the utility and demand functions $u$ and $d$ are called \textit{node-invariant} when they return the same constant for any node. Similarly, $r$ is \text{time-invariant} when the amount of available resources does not change over time.

%-------------------------------------------------------------------------------------
\def\scale{0.9}
\begin{figure*}[t]
\centering
\begin{minipage}{0.47\textwidth}
\centering
\includegraphics[width=\scale\columnwidth]{\figpath 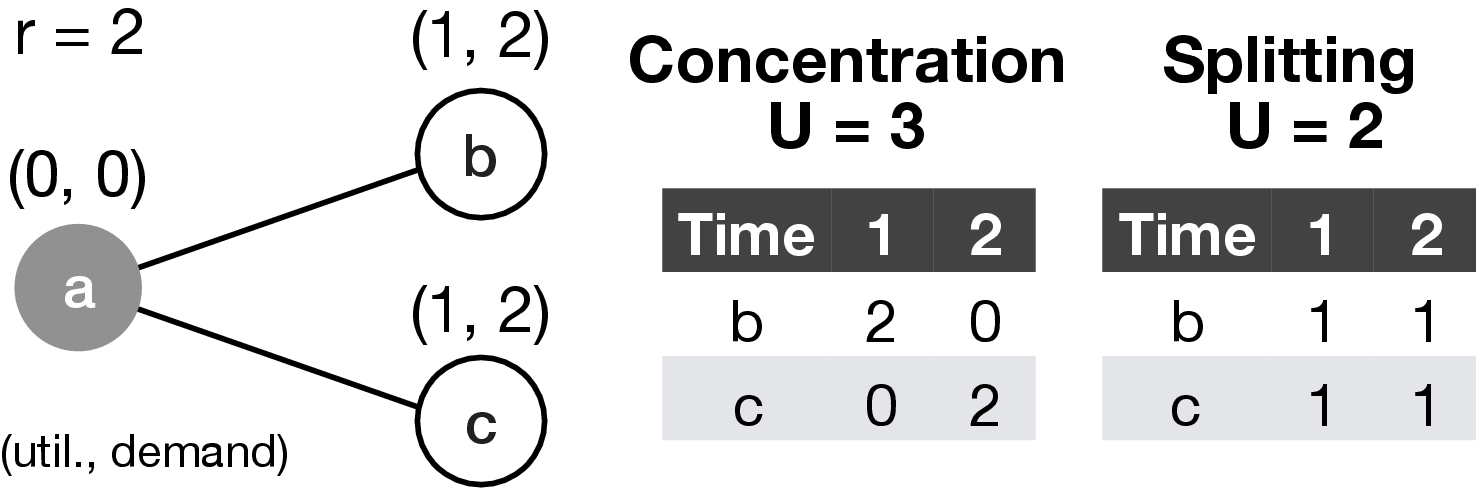}
\caption{Concentration vs. Splitting: Available resources at a time step should be concentrated to a set of nodes as much as possible.\label{concent_fig}}
\end{minipage}\hfill
\begin{minipage}{0.47\textwidth}
\centering
\includegraphics[width=\scale\columnwidth]{\figpath 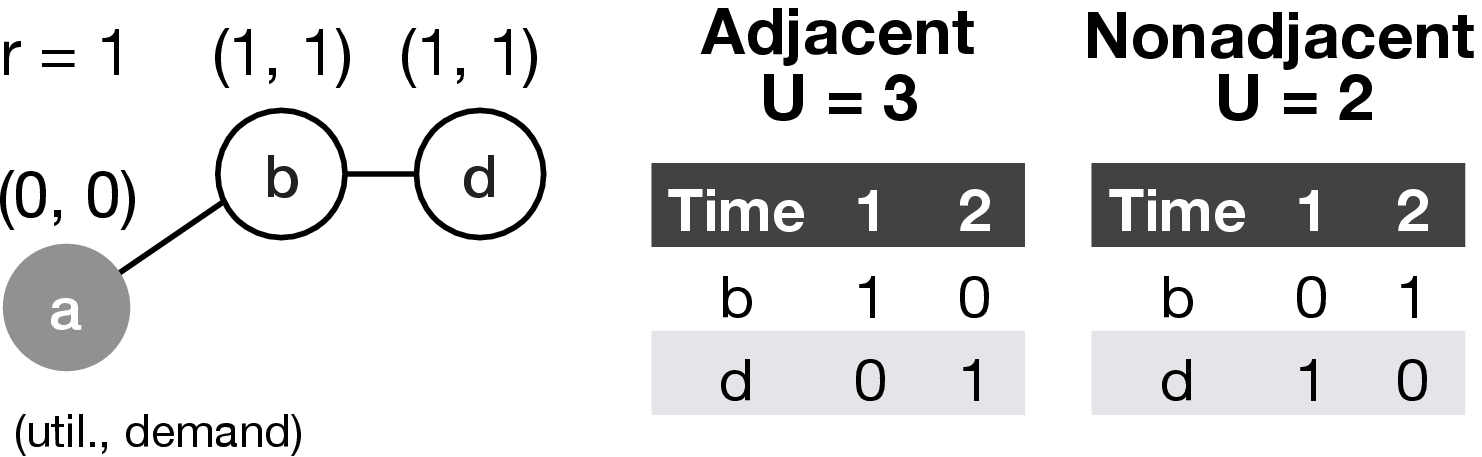}
\caption{Allocation and the Adjacency to Functional Nodes: The nodes closer to functional nodes should be prioritized in a recovery order.\label{adj_fig}}
\end{minipage}
\end{figure*}
%-------------------------------------------------------------------------------------

\ifuncertain
\begin{problem} (Uncertainty)
Given a graph $N = (V = V_0 \cup V_1, A = A_{01} \cup A_{10} \cup E_{00} \cup E_{11})$, a demand function $d$, an utility function $u$, a set of visible failed nodes $F[t_0] \subseteq F$, and a resource function $r$, maximize the network-wide utility $U_P = \sum_{i = 0}^{T} \sum_{v \in V} u(v) \alpha(v)$ by deciding a resource assignment matrix $P$, when $F(t_i)\ (i \in \{0, ..., T\})$ denotes the set of nodes in $F$ that are adjacent to the functional nodes at time $t_i$.
\end{problem}
\fi

%\subsection{Special Cases}

%\begin{problem}
%\textbf{simple general graph case (independent and dependent)}: Assume the nodes in $G_0$ never become nonfunctional. Then, it is enough to consider the progressive recovery problem in $G_1 = (V_1, E_{11})$, where the nodes in $V_1$ are partitioned into independent $I$ and dependent nodes $D$. Note that $I \cup D = V_1$. A node is an independent node when it has some supporting node in $G_0$; otherwise, the node is a dependent node, since it requires a functional independent node to obtain indirect connection with a supporting node in $G_0$. 
%%Also, we put an assumption on available resource amount to simplify the problem $r: t_i \mapsto C\ (\forall t_i,\ C \textrm{ is a constant})$.
%\label{SimpleGeneralProb}
%\end{problem}

A simpler case of the problem is one in which it is assumed that the functionality of virtualized functions totally depends on the functionality of a physical server hosting the function. In other words, there is no need for the assignment of recovery resources to repair virtualized functions, since the unavailability of the functions occurs only due to the loss of physical servers hosting them. In our terminology, when virtualized function nodes are nonfunctional, they are always saturated. 

The interdependency between the virtual and physical layer still exists even with the above assumption, since any physical machine needs at least an indirect connection with a virtual control function. Obviously, a virtual function needs at least one physical machine, which can host it, to be functional. 

\begin{definition}
A graph $N = (V, A)$ in the progressive recovery problem is said to be \textit{one-layered} when nodes in $G_0 = (V_0, E_{00})$ never require repair resources to be functional. In other words, nodes in $G_0$ are nonfunctional only because the loss of supporting nodes in the other constituent graph: $v \in \Satu[t_0]$ for any node $v \in (V_0 \cap F[t_0]$). 
\label{def-one-layer}
\end{definition}

\begin{problem}
\textbf{One-layered star case (StarPR)}: 
\label{StarProb}
Assume that the graph topology is a star whose nodes are in $G_1$, except for the center node $v \in V_0$; also, each node $u \in V_1$ is biconnected with $v \in V_0$.
%\ifjournal
%: $\tuple{\textrm{Star } N = (\{ v \} \cup V_1, A_{01}), F = \{v\} \cup V_1, \forall d\, (d(v) = 0), \forall u, \forall r}$.
%\else
%\hspace{-0.35em}.
%\fi
\end{problem}

\begin{problem}
\label{RTreeProb}
\textbf{One-layered rooted tree case}: Extend Problem \ref{StarProb} by adding more nodes to $G_1$ that are not adjacent to the node in $G_0$. i.e., the graph is a tree rooted at the node $v$ in $G_0$.
%\ifjournal
%: $\tuple{\textrm{Tree } N = (\{v\} \cup V_1, A_{01} \cup A_{11}), F = \{v\} \cup V_1, \forall d\, (d(v) = 0), \forall u, \forall r}$.
%\else
%\hspace{-0.35em}.
%\fi
\end{problem}

%\begin{problem}
%\label{TreeProb}
%\textbf{Simple tree case}: 
%Let us think about an extension of Problem \ref{StarProb} by adding more nodes to $V_1$ that are not adjacent to the nodes in $G_0$. i.e., the graph is a tree rooted at a node $v$ in $G_0$: $\tuple{\textrm{Star } N = (\{v\} \cup V_1, A_{01} \cup A_{11}), F = V_1, \forall d, \forall u, \forall r}$.
%\end{problem}

%Dynamic Programming based OPT algo for tree

\subsection{Intractability}

\ifjournal
\begin{definition} \textbf{Time-Invariant Incremental Knapsack Problem (IIK)} \cite{incremental_knapsack}: Let $X = \{x_i\}$ denote a set of items, which each have value $a(x_i)$ and weight $w(x_i)$. For any subset $X'$ of $X$, the value and weight are defined as follows: $a(S) = \sum_{x_i \in S} a(x_i)$, and $w(X') = \sum_{x_i \in S} w(x_i)$. IIK is to find a sequence of subsets of $X$, $[S_1, S_2, ..., S_T]\ (S_i \subseteq S_{i+1},\ i = 1, ..., T-1)$ from time 1 to $T$ that maximize $\sum_{t=1}^T a(S_t)$ subject to $w(S_t) \leq B_t\, (t = 1, ..., T)$, where $B_t$ is the available capacity of the knapsack at time $t$. Note that IIK is known to be NP-hard.
\label{def_IIK}
\end{definition}
\fi

\def\PR{\textrm{StarPR}}
\def\IK{\textrm{IIK}}
\begin{theorem}
The one-layered star case (StarPR) is NP-hard.
\label{general_star_NP}
\end{theorem}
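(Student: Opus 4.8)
\emph{Proof plan.} The plan is to establish NP-hardness by a polynomial-time reduction from the Time-Invariant Incremental Knapsack problem ($\IK$), which is NP-hard by Definition~\ref{def_IIK}. Given an $\IK$ instance with item set $X=\{x_i\}$, values $a$, positive weights $w$, and nondecreasing capacities $B_1\le\cdots\le B_T$, I would construct the following $\PR$ instance: a single center node $v\in V_0$ with $d(v)=0$ and $u(v)=0$; one leaf $u_x\in V_1$ for every item $x\in X$ with $d(u_x)=w(x)$ and $u(u_x)=a(x)$; a dependency arc in each direction between $v$ and every $u_x$ (so each leaf is biconnected with $v$); no intra-edges; $F[t_0]=V$; and resources $r(t_0)=0$ and $r(t_i)=B_i-B_{i-1}$ for $i=1,\dots,T$ (with $B_0:=0$), so that the total amount of resources available through time $t_i$ is exactly $B_i$. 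Because $d(v)=0$, the center is saturated from $t_0$, so this graph is one-layered in the sense of Definition~\ref{def-one-layer} and has the star topology required by Problem~\ref{StarProb}; the construction is clearly polynomial.

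The heart of the proof is the claim that the optimal network-wide utility $U_P$ of this instance equals the optimal $\IK$ objective. I would first pin down how utility is collected on the biconnected star: as soon as a single leaf is saturated at a step $t_i$, the center $v$ is functional (saturated and directly supported by a saturated leaf), and then every saturated leaf is functional (each saturated and directly supported by the functional $v$); if no leaf is saturated, $v$ and all leaves are nonfunctional. Hence, with $u(v)=0$, the utility collected at $t_i$ equals $\sum_{x:\,u_x\text{ saturated at }t_i}a(x)$, so $U_P=\sum_{i=1}^{T}\sum_{x:\,u_x\text{ saturated at }t_i}a(x)$ (the $t_0$ term vanishes as $r(t_0)=0$ and all weights are positive). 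Next, if $S_i$ denotes the set of items whose leaves are saturated at $t_i$, then $S_0\subseteq S_1\subseteq\cdots\subseteq S_T$ since assigned resources are never removed, and $w(S_i)\le B_i$ since each saturated leaf received at least its demand while the cumulative assigned amount is at most $B_i$. This $(S_i)$ is $\IK$-feasible and witnesses $U_P\le\sum_{i=1}^{T}a(S_i)\le \mathrm{OPT}_{\IK}$, i.e., $\mathrm{OPT}_{\PR}\le \mathrm{OPT}_{\IK}$.

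For the opposite direction I would take any $\IK$-feasible sequence $[S_1,\dots,S_T]$ and realize it by a recovery plan: allocate the per-step resources so that each leaf $u_x$ with $x\in S_i$ reaches exactly $w(x)$ accumulated resources by $t_i$, parking any surplus at a step as partial (``splitting'') progress on leaves not yet scheduled for recovery --- or, if none remain, harmlessly over-completing already-recovered leaves. This schedule is feasible because at step $t_i$ the additional demand to complete is $w(S_i)-w(S_{i-1})$ and, together with the partial progress already parked on those leaves, the cumulative budget $B_i$ suffices; moreover, surplus never forces an unwanted recovery, and extra recoveries only (weakly) increase $U_P$ since utilities are nonnegative. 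The resulting plan collects $U_P\ge\sum_{i=1}^{T}a(S_i)$, so $\mathrm{OPT}_{\PR}\ge \mathrm{OPT}_{\IK}$. Combining the two inequalities gives $\mathrm{OPT}_{\PR}=\mathrm{OPT}_{\IK}$, so any polynomial-time algorithm for $\PR$ would solve $\IK$ in polynomial time; hence $\PR$ is NP-hard.

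I expect the delicate step to be this last realization argument: verifying that an arbitrary nested, budget-feasible family $S_1\subseteq\cdots\subseteq S_T$ can be turned into a legal recovery plan even though at every step all $r(t_i)$ units must be assigned, assigned resources persist on their node, and a node saturates the instant its accumulated resources reach its demand. The enabling observation is that splitting assignments let one effectively ``bank'' resources as partial progress on a to-be-recovered node, so the per-step budgets impose no constraint beyond the cumulative bound $\sum_{j\le i}r(t_j)=B_i$. The companion point needing care is that a biconnected star behaves as a single percolation cluster, so that ``functional'' and ``saturated'' coincide for the leaves once any one leaf is recovered; both facts are routine but must be stated explicitly for the equivalence to be airtight.
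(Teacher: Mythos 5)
Your reduction is the same one the paper uses (items become leaves of a biconnected star with demands equal to weights, utilities equal to values, and per-step resources equal to the capacity increments $B_i - B_{i-1}$), so the approach is essentially identical; the paper simply asserts the equivalence of optima where you verify both directions explicitly. Your added details --- taking the resource function to be the capacity increments so that cumulative resources match $B_i$, and the ``banking'' argument showing any nested budget-feasible $\IK$ solution can be realized as a recovery plan despite the requirement that all resources be assigned each step --- are correct and fill in exactly what the paper's ``clearly'' glosses over.
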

\begin{proof}
What needs to be shown is $\IK \leq_p \PR$ \ifjournal \hspace{-0.35em}. \else \hspace{-0.35em}, where IIK is a known NP-hard problem, the Time-Invariant Incremental Knapsack Problem \cite{incremental_knapsack}.\fi 

Given an instance of IIK, an instance of StarPR is constructed as follows. We construct a graph with $v_i$'s that corresponds to each item $x_i \in X$ and a special node $v$. Edges are added so that each $v_i$ is adjacent to $v$: $E = \{(v, v_i)\}$. Formally, $N = (\{v\} \cup \{v_i\}, E)$. The set of failed nodes $F$ consists of $v_i$'s. The demand $d$ and utility $u$ functions are defined using the given weight $w$ and value $a$ functions, respectively. The available resource function value $r(t)$ for time $t$ is defined by the given capacity function $B_t$. This conversion is obviously executed in polynomial time.

Clearly, IIK reaches the optimum if and only if StarPR reaches the optimum, since the objective functions of these two problems are identical with the settings above. The progressive property of StarPR, which accumulates utility over time, is inherited in the property of IIK solutions that $S_i \subseteq S_{i+1}\ (i = 1, ..., T-1)$.
\end{proof}

Therefore, the PR problem is, in general, a NP-hard problem. This proof also implies that the intractability of a progressive recovery problem changes, depending on the $d$, $u$, and $r$ functions. The work \cite{interdepprogrecov_ealier} provides a polynomial time optimum algorithm for the one-layered star case (Case 1 in \cite{interdepprogrecov_ealier}) with $r: t_i \mapsto C$ and $d: V \rightarrow C$, where $C$ is a constant. 
%\ifjournal
%This means that $\tuple{\textrm{Star } N = (\{ v \} \cup V_1, A_{01}), F = \{v\} \cup V_1, \textrm{node-invariant } d, \forall u, \textrm{time-invariant }r}$ is in class P.
%\fi

\subsection{Relations among PR with Different Topology \label{problem_conversion}}

%================================
\begin{table*}[t]
\centering
\caption{Summary of Theoretical Results: Theorems collaboratively claims that (1) the special case (the PR in one-layered graphs) captures the PR in general, and (2) a set of meaningful actions is characterized by resource concentration and adjacency between nonfunctional and functional nodes.}
{\renewcommand\arraystretch{1}
\begin{tabular}{p{0.3\textwidth}|p{11em}|p{0.16\textwidth}|p{0.22\textwidth}}
\hline
\textbf{Question}	&\textbf{Lemma/Theorem}		&\textbf{Key Assumptions}	&\textbf{Reasoning}\\
\hline
Should we assign the resources at a step to one node or multiple nodes by splitting them?&Lemma \ref{concentating_is_good_in_stars}: Concentration to one node is better.&One-layered star graphs. &Earlier recovery starts incrementing the utility earlier.\\
\hline
Is it always better to assign the resources at a step to the nodes adjacent to the currently functional nodes?&Lemma \ref{saturation_is_bad}: Assignments to adjacent nodes are better.&One-layered tree graphs.&Saturation does not contribute to the utility.\\
\hline
\multirow{2}{*}{\parbox[t]{0.3\textwidth}{Does the previous two statements hold for more general cases?}}&Theorem \ref{optinrootedtree}&One-layered tree graphs.&Lemma \ref{concentating_is_good_in_stars}, \ref{saturation_is_bad}\\

	&Theorem \ref{optintree}&One-layered graphs.&Theorem \ref{optinrootedtree}, Definition \ref{pseudo_star}\\
\hline
Can we convert the PR in general into the PR in a one-layered graph?&Theorem \ref{general_are_onelayer}&$u(v) = 0\ \forall v \in V_0.$&Lemma \ref{support_pair_ordering}-\ref{no_two_consecutive_zeros}: The order of recovery is not affected even when adding an aggregated node $x$ that makes a graph one-layered. (See Figure \ref{aggregate_x}.)\\
\hline
\multirow{2}{*}{\parbox[t]{0.3\textwidth}{How difficult is the PR problem?}} & Theorem \ref{general_star_NP} & One-layered star graphs. &Reduction from IIK (Definition \ref{def_IIK}).\\
&The PR is NP-hard. & Any graphs satisfying $u(v) = 0\ \forall v \in V_0$. & Theorem \ref{general_are_onelayer}\\
\hline
\end{tabular}
}
\label{sum_theo}
\end{table*}
%================================

This section first characterizes the optimum recovery plan in special types of graphs (one-layered graphs). Also, it is proven that the optimum recovery plan of a general network topology shares the same property with that of one-layered graphs, by showing the conversion of the general case into one-layered graph cases. Table \ref{sum_theo} summarizes the theoretical results discussed in this section.

\ifjournal
\begin{lemma}
The optimum recovery plan $P^*$ for any one-layered star graph only consists of concentrating assignments when $r: t_i \mapsto C\ (\forall t_i)$. (See Figure \ref{concent_fig}.)
%: $\tuple{\textrm{Star } N = (\{v\} \cup V_1, A_{01}), F = \{v\} \cup V_1, \forall d\, (d(v) = 0), \forall u, \textrm{time-invariant } r}$.
\label{concentating_is_good_in_stars}
\end{lemma}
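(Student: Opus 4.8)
The plan is to argue by an exchange argument: starting from an arbitrary optimum $P^*$, I would show that if $P^*$ contains a splitting assignment at some time step, then we can transform it into another recovery plan with at least as much cumulative utility in which that splitting assignment is replaced by a concentrating one, without creating new splitting assignments at earlier steps. Iterating this transformation from the earliest offending step to the latest yields an optimum plan consisting entirely of concentrating assignments. The structural simplification that makes this feasible is the one-layered star topology with uniform resources $r: t_i \mapsto C$: every non-center node $v_i$ is biconnected to the center $v \in V_0$, so $v_i$ becomes functional \emph{exactly} when it becomes saturated (since $v$ is never in $F[t_0]$ by the one-layered assumption, or is itself recovered early), and hence recovery reduces purely to a question of \emph{when} each $v_i$ accumulates $d(v_i)$ units. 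So the only thing that matters for the objective is the sequence of saturation times, and splitting strictly delays some saturation time without advancing any other.

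The key steps, in order, would be: (1) Using Definition \ref{def-one-layer} and the biconnectivity in Problem \ref{StarProb}, establish the reduction ``functional $\iff$ saturated'' for the non-center nodes, so that $U_P = \sum_{i=0}^T \sum_{v} u(v)\,\alpha_i(v)$ depends on $P$ only through the saturation times $\{\tau_v\}$, and in fact $U_P = \sum_v u(v)\,(T - \tau_v + 1)$ (modulo off-by-one bookkeeping for the center node). (2) Fix the earliest step $t_k$ at which $P^*$ uses a splitting assignment; by definition of splitting, there is a node that \emph{could} have been saturated or recovered at $t_k$ but was not, because resources were spread across several unsaturated nodes. (3) Define a modified plan $P'$ that agrees with $P^*$ up to $t_{k-1}$, and at $t_k$ reallocates the $C$ units so as to saturate as many nodes as possible and dump the remainder onto a single node (a concentrating assignment), while leaving future per-step budgets $C$ untouched. (4) Show $U_{P'} \ge U_{P^*}$: each node that $P^*$ eventually saturates is saturated no later under $P'$, because $P'$ front-loads the same or more total resource onto the nodes it commits to, and the star structure means there is no ordering constraint among the $v_i$ forcing a later one; formalize this with a greedy/majorization comparison of partial sums of allocations. (5) Observe that $P'$ has no splitting assignment at $t_k$ and none before $t_k$ (it matches $P^*$ there), so the index of the first splitting step strictly increases; conclude by induction on this index.

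The main obstacle I expect is step (4): carefully showing that the reallocation at a single step does not \emph{worsen} any later saturation time. The subtlety is that after concentrating at $t_k$ there may be ``leftover'' resources on a partially-filled node that $P^*$ had distributed differently, and one must verify that the induced change to all subsequent allocations can be done within the per-step budget $C$ while only moving saturation times earlier or leaving them fixed — essentially a statement that the vector of cumulative allocations under $P'$ componentwise dominates a suitable rearrangement of that under $P^*$. I would handle this by an explicit coupling: process nodes in the order $P^*$ saturates them, and show greedily that $P'$ can reach each node's demand threshold at a time $\le \tau_v$. A secondary, more bookkeeping-level obstacle is the center node $v \in V_0$ itself — it carries $u(v)$ possibly nonzero and sits on the "other" side of the dependency, so one must confirm the argument degrades gracefully (or note $u(v)=0$ is the relevant regime, consistent with the $u(v)=0\ \forall v\in V_0$ assumption appearing later in Table \ref{sum_theo}) and that recovering $v$, if needed, is handled uniformly by the same concentration principle.
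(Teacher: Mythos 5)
Your proposal is correct in substance but follows a genuinely different route from the paper. The paper proves the lemma by explicit computation on the smallest instance: a two-node star with $C=2$ and demands divisible by $C$, comparing $U_P$ and $U_{P'}$ term by term; it then relaxes to arbitrary integer demands (tracking the ceiling/parity corrections), asserts that general $C$ follows by rescaling, and asserts that $n$-node stars follow by the same reasoning, with the closing observation that splitting delays some node's recovery while the total number of steps $\ceiling{\sum_v d(v)/C}$ is fixed. You instead run a general exchange argument: reduce the objective to the vector of saturation times via the ``functional $\iff$ saturated'' observation, locate the earliest splitting step, repair it into a concentrating step, and induct on the index of the first offending step, with a cumulative-allocation (majorization) comparison doing the work. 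Your approach buys a uniform treatment of arbitrary $n$ and $C$ that the paper only gestures at, and your step (1) makes explicit a reduction the paper leaves implicit; the paper's approach buys concreteness and exhibits the exact utility gap in the base case. One simplification worth making in your step (4): rather than repairing one step at a time and re-deriving all subsequent allocations, compare $P^*$ directly against the single fully concentrating plan that saturates nodes in the same order as $P^*$; if $D_j$ is the total demand of the first $j$ nodes in that order, then $P^*$ saturates the $j$-th node no earlier than step $\ceiling{D_j/C}$ (since at least $D_j$ units must have been delivered to those nodes), which is exactly when the concentrating plan saturates it, so every saturation time weakly improves in one shot and the induction is unnecessary. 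A final caveat applying to both proofs: the strict ``only consists of'' claim implicitly assumes the delayed node has positive utility; a leaf with $u=0$ admits an optimum containing a splitting assignment, an edge case neither you nor the paper addresses.
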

\begin{proof}
\ifjournal
First, we argue that the statement is true for a star graph with two nodes with the assumption that $C = 2$, and the demands of the nodes are divisible by $C$. Suppose $P$ only consists of concentrating assignments and $P'$ includes some splitting assignment. 

Because $P$ concentrates resources on a node $v_i$, the node becomes functional after $\frac{d(v_i)}{C (=2)}$ steps. After these steps, it takes $\frac{d(v_j)}{2}$ additional steps to recover the other node $v_j$. Note that during these $\frac{d(v_j)}{2}$ steps, the network-wide utility is always $u(v_i)$. Therefore, $U_P = \frac{d(v_j)}{2} \times u(v_i) + u(v_j)$.

Consider $P'$, which contains a splitting assignment at one time step $t_k$ and concentrating assignments for the other steps. The splitting must be conducted before $v_i$ becomes functional, since there are only two nodes. Then, it takes $\frac{d(v_i)}{2} + 1$ steps for $v_i$ to be recovered and $\frac{d(v_j)}{2} - 1$ steps for $v_j$ to be recovered. Note that $v_j$ receives one unit of resource at both step $t_k$ and step $(\frac{d(v_i)}{2} + 1)$. Therefore, $U_{P'} = (\frac{d(v_j)}{2} -1) \times u(v_i) + u(v_j) < U_P$. The same discussion can be applied to the cases with more splitting. Thus, $U_{P'}$ decreases when more splitting assignments are included in $P'$. When $P'$ only consists of splitting assignments, it takes $d(v_i)$ steps for both nodes to be recovered. Therefore the network-wide utility is $u(v_i) + u(v_j) < U_P$.

%--------------------------

Second, we relax the settings by allowing more general demands $d(v_i), d(v_j) \in \mathbb{N}$. 
%$2 \nmid d(v_i)$, and $2 \nmid d(v_j)$. 
Without loss of generality, suppose $d(v_i) > d(v_j)$. There are three recovery plans to be compared. Let $P_l$ denote the recovery plan only consisting of concentrations with the prioritization of $v_l$ and $P'$ be a plan including splitting. Based on the previous discussion, $U_{P_i} =  \ceiling{\frac{d(v_j)}{2}} \times u(v_i) + u(v_j)$, and $U_{P_j} =  \ceiling{\frac{d(v_i)}{2}} \times u(v_j) + u(v_i)$.

When $P'$ uses the splitting assignment at one step, $v_j$ is recovered at step $\ceiling{\frac{d(v_j)}{2}}$, and it takes $\ceiling{\frac{d(v_i) - 1 - e}{2}}$ additional steps to recover $v_i$, where $ e = d(v_j)\mod 2$. This is because the splitting assigns one unit of resources to $v_i$, and the ceiling function at step $\ceiling{\frac{d(v_j)}{2}}$ may assign another excess unit, depending on if $d(v_j)$ is divided by $C$. Therefore, $U_P$ is at most $\ceiling{\frac{d(v_i) - 1}{2}}\times u(v_j) + u(v_i) \leq U_{P_j}$. When $P'$ exploits more splitting assignments, the network-wide utility decreases as observed in the previous setting.

%--------------------------
It is easily shown by similar discussion that, for any $C (> 2)$, a recovery plan that only includes concentrating assignments is better than plans including splitting assignments. This is because the difference in resource amounts is just a problem of scaling of $C$ and $d$. Thus, the inherent property of the spitting and concentrating assignments does hold even with any different $C$.

It is also obvious that similar discussions hold for general star graphs with $n$ nodes. The key property here is that the splitting delays recovery of a certain node by assigning resources to more nodes, even though the number of steps required to recover all nodes is fixed: $\ceiling{\frac{\sum_{v \in V} d(v)}{C}}$.
\end{proof}

\begin{lemma}
The optimum recovery plan $P^*$ for any one-layered rooted tree never saturates any node that is \underline{not} adjacent to a functional node; i.e., the candidate nodes for resource assignments are always adjacent to a functional node when $r: t_i \mapsto C\ (\forall t_i)$.
%$\tuple{\textrm{Tree } N = (\{v\} \cup V_1, A_{01} \cup A_{11}), F = \{v\} \cup V_1, \forall d\, (d(v) = 0), \forall u,  \textrm{time-invariant } r}$.
\label{saturation_is_bad}
\end{lemma}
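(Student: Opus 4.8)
The plan is to prove the reformulation stated in the lemma: without loss of generality, an optimal plan $P^*$ assigns repair resources only to nodes adjacent to a currently functional node. First I would record the structural fact that, since the topology is a tree rooted at $v \in V_0$ and $v$ is saturated from the outset (one-layeredness) and anchors the recovery, the set of functional nodes at every step $t_i$ is a subtree containing $v$: by the definition of functionality, a non-root node $w$ with parent $p$ can be functional only if $p$ is functional, because any functional path from a saturated supporting node down to $w$ must traverse $p$. Hence, for a not-yet-functional non-root node $w$, ``$w$ is adjacent to a functional node'' is equivalent to ``$p$ is functional'' (with $p = v$ always counting), and these frontier nodes are exactly the ones whose saturation is turned into accumulated utility without delay --- saturating any other node contributes nothing for as long as some proper ancestor of it stays non-functional.

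Next I would suppose, toward a contradiction, that $P^*$ saturates some node $w$ at the earliest step $t_s$ (over all nodes) at which the parent of $w$ is not yet functional. Writing $v = u_0, u_1, \dots, u_\ell = w$ for the root-to-$w$ path, let $u_j$ ($1 \le j \le \ell-1$) be the shallowest node on it that is not functional at $t_s$, so that $u_{j-1}$ is functional and $u_j$ lies on the frontier. I would build $P'$ from $P^*$ by re-slotting the $d(w)$ units that $P^*$ delivers to $w$ at steps $\le t_s$, handing them instead --- as early as the constant per-step budget $r(t_i) = C$ permits --- to $u_j, u_{j+1}, \dots, u_{\ell-1}$ in that order; the units of later budget that $P^*$ had spent finishing $u_j, \dots, u_{\ell-1}$ are thereby freed and used to re-saturate $w$. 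Since neither the per-step budget nor any node's total demand changes, $P'$ is a feasible recovery plan obtained from $P^*$ purely by reordering which node receives each unit.

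It then remains to check $U_{P'} \ge U_{P^*}$. The crux is a monotonicity claim: having moved resources ``up'' the chain $u_j, \dots, u_{\ell-1}$, each of those ancestors becomes functional in $P'$ no later than in $P^*$, and --- propagating this comparison down every subtree hanging off the chain --- every node becomes functional in $P'$ no later than in $P^*$; in particular $w$'s own functional step does not increase, because the extra time needed to re-saturate $w$ is offset by the earlier recovery of its ancestor chain. Hence no node's utility window shrinks, so $U_{P'} \ge U_{P^*}$: $P'$ is again optimal and has strictly fewer premature saturations, and iterating produces an optimal plan with the claimed property. When $u_j$ (or $w$) has any further descendant still waiting on the chain, the exchange is strictly improving, so no optimal plan can prematurely saturate $w$ in that case; the remaining neutral configurations add nothing to the objective and are absorbed into the ``without loss of generality''.

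The step I expect to be the main obstacle is making the exchange of the second paragraph fully rigorous: re-slotting an entire stream of resource units across time steps while respecting the budget $C$ at every step, keeping every node saturated, and --- above all --- genuinely proving the monotonicity of the functional steps rather than merely asserting it (the naive bookkeeping has to be done carefully, and a one-unit-at-a-time version of the exchange may turn out to be cleaner). Two minor cases also need disposal: when the parent of $w$ never becomes functional in $P^*$ (the $d(w)$ units are simply wasted and can be handed to any frontier node), and the bootstrapping of the root's functionality. If the direct exchange proves unwieldy, a cleaner fallback is induction on tree depth --- a depth-$1$ tree is a star in which every failed node already lies on the frontier, so the claim is immediate there, and Lemma \ref{concentating_is_good_in_stars} can absorb the extra flexibility introduced at each peeled-off level.
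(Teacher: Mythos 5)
Your proposal is correct and rests on the same underlying idea as the paper's proof --- an exchange argument showing that resources diverted to a node behind the frontier can be reordered to the frontier without delaying anyone --- but your execution is genuinely more general. The paper's proof only treats the minimal configuration explicitly: two adjacent nodes $v_i, v_j$ where $v_i$ is adjacent to a functional node and $v_j$ is its child, with a direct comparison of the two orderings ($\ceiling{d(v_j)/C}\cdot u(v_i) + u(v_j)$ versus zero interim utility while $v_j$ is saturated first), deferring the splitting case to the star-graph argument of Lemma \ref{concentating_is_good_in_stars}; the extension to a prematurely saturated node at arbitrary depth is left implicit. You instead take an arbitrary premature saturation of $w$, reroute its budget up the entire root-to-$w$ chain, and prove a global monotonicity of functional times --- which is what a fully rigorous version of the lemma actually requires. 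The obstacle you flag (proving monotonicity rather than asserting it) is real but resolves cleanly: in a one-layered rooted tree, a node is functional at step $t$ if and only if it and all of its ancestors are saturated at $t$, so functional times are monotone nondecreasing functions of the vector of saturation times; your reallocation preserves the per-step total delivered to the set $\{u_j,\dots,u_{\ell-1},w\}$, weakly advances every saturation time except $w$'s, and leaves the step at which the last unit reaches that set unchanged, so $w$'s functional time (the maximum saturation time along its ancestor chain) does not increase while every other node's weakly decreases. Your depth-induction fallback is also viable and is in fact closer to how the paper proves Theorem \ref{optinrootedtree}, which repeatedly contracts the frontier into a star and invokes Lemma \ref{concentating_is_good_in_stars}.
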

\begin{proof}
\ifjournal
For contradiction, consider the case where saturation gives us better network-wide utility. Suppose there are two adjacent nodes $v_i, v_j$ in a rooted tree, such that $v_i$ is adjacent to an independent node, but $v_j$ is not. 

First, we consider the case only with concentrating assignments. After saturating $v_j$, it takes $\ceiling{\frac{d(v_i)}{C}}$ steps to recover $v_i$. During these steps, the utility provided by $v_j$ remains 0. In contrast, when $v_i$ is recovered before $v_j$, it takes $\ceiling{\frac{d(v_j)}{C}}$ to recover $v_j$, and $v_i$ will provide utility of $u(v_i)$ at each of these steps. This generates contradiction, since the number of total steps in both scenarios stays the same.

Second, let us try to improve the total utility, by introducing the splitting assignments, from $\ceiling{\frac{d(v_j)}{C}} \times u(v_i) + u(v_j)$. However, this is impossible based on the discussion in star graphs. When exploiting the splitting at one step, the duration that $v_i$ is functional is strictly less than $\ceiling{\frac{d(v_j)}{C}}$.
\end{proof}

\fi

\begin{theorem}
The optimum recovery plan $P^*$ for any one-layered rooted tree only consists of concentrating assignments that allocate resources to nodes adjacent to a functional node when $r: t_i \mapsto C\ (\forall t_i)$. (See Figure \ref{adj_fig}.)
%\ifjournal
%: $\tuple{\textrm{Tree } N = (\{v\} \cup V_1, A_{01} \cup A_{11}), F = \{v\} \cup V_1, \forall d\, (d(v) = 0), \forall u,  \textrm{time-invariant } r}$.
%\else
%\hspace{-0.35em}.
%\fi
\label{optinrootedtree}
\end{theorem}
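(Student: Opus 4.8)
The plan is to combine Lemma~\ref{concentating_is_good_in_stars} and Lemma~\ref{saturation_is_bad}, both of which already hold for one-layered rooted trees (the latter directly, the former for its star special case), and lift them to the full statement about rooted trees. The two lemmas address orthogonal aspects of an assignment $P[t_i]$: Lemma~\ref{saturation_is_bad} says the optimum never touches a node that is not adjacent to a currently functional node, which pins down \emph{which} nodes are candidates at each step, while Lemma~\ref{concentating_is_good_in_stars} (via its star argument) says that among the candidates one should never split resources. So the theorem is essentially the conjunction of these two facts, and the work is to argue they can be imposed simultaneously without one breaking the other.

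First I would set up a generic exchange argument: take an arbitrary optimum plan $P^*$ and suppose it violates the claimed structure at some step. There are two types of violation. Type~(a): at some step $t_k$, $P^*$ assigns resources to a node $v$ that is not adjacent to any functional node. By Lemma~\ref{saturation_is_bad} (applied to the rooted tree, treating the subtree hanging off the current functional frontier), such an assignment can be postponed---redirect those $C$ units to a node on the functional frontier---without decreasing $U_{P^*}$, because a node not adjacent to the frontier cannot become \emph{functional} at $t_k$ anyway (it would only become saturated, contributing zero utility, by one-layeredness and the definition of functional via a simple path of functional nodes), so delaying its saturation costs nothing while advancing a frontier node's recovery can only help. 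Type~(b): at some step $t_k$, among frontier nodes, $P^*$ performs a splitting assignment. Here I would localize to the relevant part of the tree and invoke the star argument of Lemma~\ref{concentating_is_good_in_stars}: the frontier nodes under consideration are all children of already-functional nodes, so the local picture is a star (one functional node as center, its nonfunctional children), and the lemma's computation---splitting strictly delays the recovery of some node while the total number of steps $\ceiling{\sum_v d(v)/C}$ stays fixed---shows the split can be replaced by a concentration without loss.

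The key step, and the main obstacle, is to make the exchange argument terminate and to handle the interaction between the two repairs: fixing a Type~(b) violation by concentrating might change which nodes enter the frontier and when, potentially creating new Type~(a) or Type~(b) violations downstream. I would address this by a potential/induction argument on time steps: process the steps $t_0, t_1, \dots$ in order, and show that once steps $t_0,\dots,t_{k-1}$ have been put in canonical form (concentrating, frontier-adjacent) without decreasing the objective, the assignment at $t_k$ can likewise be canonicalized. Because earlier steps are fixed first and a concentrating, frontier-respecting choice at $t_k$ only enlarges the set of frontier nodes available at $t_{k+1}$ relative to the worst alternative, no previously-resolved step is disturbed. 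Care is also needed because the tree is rooted at the node $v \in V_0$: the very first recoveries must proceed from $v$ outward (a child of $v$ in $G_1$ is functional only once saturated \emph{and} reachable to $v$), so the frontier is always a connected subtree containing $v$, which is exactly the structure the star sub-argument needs. Finally I would note the constant-resource hypothesis $r: t_i \mapsto C$ is used exactly where the lemmas use it---to equate ``number of steps to finish'' across plans---and is inherited unchanged.
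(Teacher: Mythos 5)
Your proposal is correct and follows essentially the same route as the paper: Lemma~\ref{saturation_is_bad} restricts candidates to the functional frontier, Lemma~\ref{concentating_is_good_in_stars} rules out splitting among those candidates because the local picture is a star, and an induction (the paper phrases it as contracting the edge to the newly functional node and repeating; you phrase it as canonicalizing time steps in order) propagates the argument. The termination concern you flag is handled the same way in both versions, by fixing earlier steps first so the reduction at each stage is to a single-functional-node star instance.
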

\begin{proof}
\ifjournal
When a network has only one functional node, Lemma \ref{saturation_is_bad} eliminates the possibilities to assign resources beyond the neighbors of the functional node. Then, the network can be considered as a star graph consisting of the functional node and its neighbors. Hence, the statement holds because of Lemma \ref{concentating_is_good_in_stars}.

Accordingly, the node that becomes functional next is adjacent to a functional node. By contracting the edge between the two functional nodes, the problem is reduced to the original problem with one functional node.
\end{proof}

\ifjournal
\begin{definition} \textbf{Pseudo star graph $S_G(\alpha_k)$}: Given a graph $G = (V, E)$ and a node state function $\alpha_k$ at time $t_k$, the logical star graph $S_G(\alpha_k) = (V(S_G(\alpha_k)), E(S_G(\alpha_k))$ consists of one logical functional node $s$ and the nodes adjacent to any of the functional nodes in original graph, and edges connecting $s$ and the others.
\ifjournal
 Formally, $V(S_G(\alpha_k)) = \{s\} \cup \{v_i \in V | \exists v_j \in V \ \mathrm{ s.t. }\ \alpha_k(v_j) = 1 \ \mathrm{ and }\ (v_j, v_i) \in E\}$, and $E(S_G(\alpha_k))) = \{(s, v_i) \mid v_i \in V(S_G(\alpha_k))\}$. \label{pseudo_star}
\fi
\end{definition}
\fi

The same statement holds for the case where $G_0$ has more nodes, and there exists more biconnected pairs of nodes between $G_0$ and $G_1$.
\begin{theorem}
For any one-layered graph, the optimum recovery plan $P^*$ only consists of concentrating assignments that allocate resources to nodes adjacent to a functional node when $r: t_i \mapsto C (\forall t_i)$.
%: $\tuple{\forall N, F = V, \forall d\, (d(v\in V_0) = 0), \forall u,  \textrm{time-invariant } r}$.
\label{optintree}
\end{theorem}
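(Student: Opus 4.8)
The plan is to reduce the general one-layered case to the rooted-tree case already settled in Theorem~\ref{optinrootedtree}, by exploiting the pseudo star graph construction of Definition~\ref{pseudo_star}. The key observation is that, at any time step $t_k$, the only nodes that are candidates for a meaningful resource assignment are those in $V(S_G(\alpha_k))$, i.e.\ the nodes adjacent to some currently functional node; moreover what happens inside $S_G(\alpha_k)$ is exactly a one-layered star instance with center $s$. So I would proceed inductively on the number of recovery steps (equivalently, on the number of nonfunctional nodes), arguing that at each step the optimum must (i) concentrate its resources on a single node and (ii) pick that node from $V(S_G(\alpha_k))\setminus\{s\}$.

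First I would establish the adjacency claim in the general one-layered setting. In a rooted tree, Lemma~\ref{saturation_is_bad} says saturating a non-adjacent node is never beneficial because saturation alone contributes nothing to utility while consuming resources that could have been used to \emph{recover} an adjacent node and start accumulating its utility immediately; the same exchange argument goes through verbatim in a general one-layered graph, since the only structural fact used is that a functional node must lie on a functional path back to a saturated supporting node in the other layer, and a saturated-but-unreachable node yields zero utility regardless of the surrounding topology. Next, given that the candidate set is $V(S_G(\alpha_k))$, I would invoke Lemma~\ref{concentating_is_good_in_stars}: collapsing all currently functional nodes into the single logical center $s$ turns the relevant subproblem into a one-layered star with center $s$ and leaves $V(S_G(\alpha_k))\setminus\{s\}$, on which concentration strictly dominates splitting. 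Recovering the chosen neighbor then enlarges the functional set, and I contract the relevant edge (as in the proof of Theorem~\ref{optinrootedtree}) to return to a strictly smaller instance, closing the induction.

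The step I expect to be the main obstacle is justifying that restricting attention to $V(S_G(\alpha_k))$ does not lose optimality when $G_0$ contains several nodes and there are multiple biconnected pairs between $G_0$ and $G_1$ — i.e.\ that the logical-star reduction is faithful even though the "real" graph may have several disjoint functional components supported by different supporting nodes, or a node that becomes functional only once two separate functional regions are bridged. In the single-supporting-node rooted-tree case this subtlety is absent, so the clean argument there must be upgraded: I would handle it by noting that any node which is not adjacent to \emph{any} functional node cannot become functional at the current step no matter how resources are assigned (it would still fail the functional-path condition), hence assigning it resources before some adjacent node is recovered is a pure delay, and a standard exchange argument moves that assignment later without decreasing $U_P$. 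Once that is in hand, the per-step structure is exactly a star, and the two lemmas finish the argument; the multi-component case is absorbed because $s$ in $S_G(\alpha_k)$ represents \emph{all} functional nodes simultaneously, so a neighbor of any component is eligible and the star analysis is indifferent to which component it attaches to.
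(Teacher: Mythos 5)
Your proposal is correct and follows essentially the same route as the paper's own proof: first rule out assignments to nodes not adjacent to a functional node by the exchange argument of Lemma~\ref{saturation_is_bad}, then reduce the per-step decision to the pseudo star graph $S_G(\alpha_k)$ of Definition~\ref{pseudo_star} and apply Lemma~\ref{concentating_is_good_in_stars} to exclude splitting. Your explicit induction and your treatment of the multi-component case are somewhat more careful than the paper's terse argument, but they are elaborations of the same idea rather than a different approach.
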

\begin{proof}
\ifjournal
%1. saturation is bad
It is trivial that the optimum recovery plan does not saturate any node that is not adjacent to a functional node, even when a graph has more than one independent nodes or any cycle. Based on a discussion similar to Lemma \ref{saturation_is_bad}, an assignment of resources to a node adjacent to a functional node always provides more network-wide utility over time, since the node assigned resources starts contributing to the utility in an earlier step. Thus, the candidate nodes for resource assignment at each step $t_k$ are the nodes adjacent to any functional node.

%therefore, candidates to assign resources \{v_i \in V | \alpha(v_j) = 1 \& (v_j, v_i) \in E\}
%logical star 
Therefore, a resource assignment decision at each time step, $P[t_k]$ is equivalent to the progressive recovery problem in a logical star graph $S_G(\alpha_k)$, where $\alpha_k$ is a node state function reflecting recovery from $t_0$ to $t_{k-1}$. Therefore, it can be considered as the recovery problem in a star graph with a single logical functional node at the center and surrounding leaf nodes $V(S_G(\alpha_k))$. 

Hence, it is easily provable, by the argument in Lemma \ref{concentating_is_good_in_stars}, that the optimum plan does not involve splitting assignments, since the concentration of the split resources to a node can always recover the node in an earlier time step and provide more network-wide utility.
\else
The multiple nodes in $G_0$ can be recognized as one logical node, since there is no need to assign resources to them. Hence, the resource assignment decision at each time step is equivalent to the progressive recovery problem in a logical star graph. Please see our technical report \cite{ourtechrep} for details.
%\seetechrep
\fi
\end{proof}

\def\conditionA{$d(v_i) + d(v_j) > 2C - 1\ (\forall (v_i, v_{j \neq i}) \in V \times V)$}

\ifjournal

Next, we claim that the progressive recovery problem with any network topology can be converted into the case in a one-layered graph.

%-------------------------------------------------------------------------------------
\begin{figure}[t]
\def\scale{0.83}
\centering
\begin{minipage}{0.48\columnwidth}
\centering
\includegraphics[width=\scale\columnwidth]{\figpath 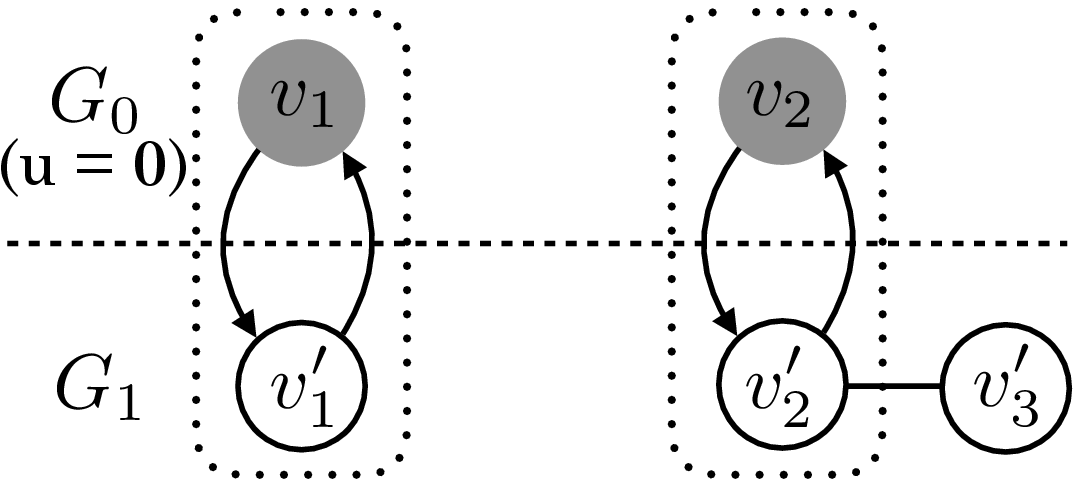}
\caption{Supporting Pairs $(v_1, v'_1)$ and $(v_2, v'_2)$: The first recovery occurs only when two nodes in a pair are saturated.\label{supportpair}}
%\includegraphics[width=\scale\columnwidth]{\figpath fig_theo4a.eps}
%\caption{Conversion into a One-layer Graph: A new node x logically \label{aggregate_x_1}}
\end{minipage}\hfill
\begin{minipage}{0.48\columnwidth}
\centering
\includegraphics[width=\scale\columnwidth]{\figpath 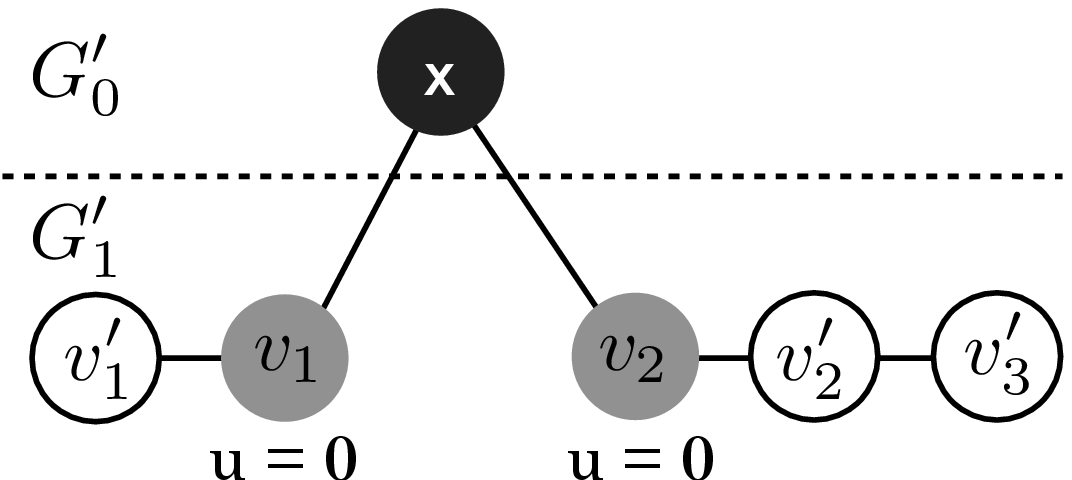}
\caption{Conversion into a One-layered Graph: The graph in Figure \ref{supportpair} is converted into a one-layered graph. A new node x logically forms a new layer $G'_0$, and the rest of the nodes form the other new layer $G'_1$.\label{aggregate_x}}
\end{minipage}
\end{figure}
\def\scale{1}
%-------------------------------------------------------------------------------------

\begin{definition}
A pair of nodes $v \in V_0$ and $v' \in V_1$ is called a \textit{support pair} when $(v, v') \in A_{01}$ and $(v', v) \in A_{10}$. (See Figure \ref{supportpair}.)
\end{definition}

\begin{lemma}
When $v$ and $v'$ are the first support pair recovered in a given graph $N$, the order of saturation of these two nodes does not influence the total utility.
\label{support_pair_ordering}
\end{lemma}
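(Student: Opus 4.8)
The plan is to argue that, before the first support pair $(v,v')$ becomes functional, the two constituent nodes contribute exactly zero utility to the objective regardless of the order in which their demands are satisfied, and that after both are saturated the subsequent state of the network is identical in both orderings. Since $(v,v')$ is the \emph{first} support pair to be recovered, neither $v$ nor $v'$ can be functional until both are saturated: $v$ needs a saturated supporting node in $G_1$ reachable via functional nodes, and before $(v,v')$ recovers the only candidate for such a support relation involving $v$ is $v'$ itself (and symmetrically for $v'$), so functionality of either one requires saturation of the other. Hence the term $u(v)\alpha_i(v) + u(v')\alpha_i(v')$ is $0$ at every step $t_i$ before the recovery instant, independent of the saturation order.

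The key steps, in order, would be: (1) fix a recovery plan $P$ and let $t_k$ be the step at which $v$ and $v'$ first both become saturated; show that $\alpha_i(v)=\alpha_i(v')=0$ for all $i<k$, using the "first support pair" hypothesis as above; (2) observe that any other node's functionality at steps $i<k$ is unaffected by whether resources inside the $\{v,v'\}$ block were spent on $v$-first or $v'$-first, because neither node is functional in that window and therefore neither can serve as an intermediate functional node on a simple path supporting a third node; (3) conclude that the cumulative saturation sets $K[t_i]$ restricted to $V\setminus\{v,v'\}$, and hence the utility collected, coincide up to step $k$; (4) note that at step $t_k$ both $v$ and $v'$ are saturated in either ordering, the total resource spent on the pair is the same ($d(v)+d(v')$ split across the same steps up to $t_k$), so the plan from $t_k$ onward sees an identical configuration and collects identical utility. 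Summing, $U_P$ is the same for both orderings.

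The main obstacle I anticipate is step (2): making precise that swapping the intra-block order of resource allocation does not change the functionality status of any node outside the block at any intermediate step. One must rule out the possibility that, say, saturating $v'$ one step earlier lets some third node $w$ become functional earlier by using $v'$ as a relay — but this cannot happen precisely because $v'$ is not yet \emph{functional} (only saturated) until $(v,v')$ recovers, and the definition of functional requires the supporting path to consist of functional nodes. A careful statement here, perhaps phrased as an induction on time steps showing $\alpha_i$ agrees on $V\setminus\{v,v'\}$ for all $i\le k$, closes the gap. A minor secondary point is handling splitting assignments inside the block, but since we are only comparing two plans that differ by the order of saturation of $v$ and $v'$ and agree elsewhere, the bookkeeping of resources is straightforward.
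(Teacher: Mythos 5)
Your proposal is correct and follows essentially the same route as the paper's proof: both rest on the observation that, since $(v,v')$ is the first support pair recovered, no node is functional (hence no utility accrues) before the second member of the pair is saturated, and that at that moment the set of saturated nodes is identical under either ordering, so the utility from then on coincides. Your explicit induction on intermediate steps and the treatment of the post-saturation configuration just make precise what the paper states more briefly.
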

\begin{proof}
\ifjournal
Let us assume that a recovery plan saturates $v$ first and $v'$ later. Note that there may be some nodes saturated before and between $v$ and $v'$. Since $v, v'$ are the first supporting pair to be recovered, there is no functional node in $N$ before $v'$ is saturated. The total utility generated until the step $t_i$ when $v'$ is saturated is $u(v) + u(v') + \sum_{w \in V_r} u(w)$, where $V_r \subseteq \Satu[t_i]$ is a set of saturated nodes that are reachable from $v$ or $v'$. When we exchange the ordering of $v$ and $v'$, the total utility until the step $t_i$ when $v$ is saturated remains the same, because the saturated nodes until $t_i$ are same. Therefore, the order of saturation of $v$ and $v'$ does not change the total utility.
\end{proof}

\begin{lemma}
In any graph, the first two nodes saturated by the optimum recovery plan $P^*$ are always the nodes in a support pair.
%\ifjournal
%$\tuple{\forall N, F = V, \forall d, \forall u, \forall r}$.
%\fi
\label{support_pair_first}
\end{lemma}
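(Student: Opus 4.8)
The plan is to argue by contradiction, assuming that the optimum recovery plan $P^*$ saturates some node that is not part of a support pair before any support pair is completed. The key observation is that utility only begins accumulating once a node becomes \emph{functional}, and a node becomes functional only if it is saturated \emph{and} reachable via functional nodes from a saturated supporting node in the other constituent graph. Before the first support pair is saturated, no node in the graph can be functional: a support pair is precisely the minimal mutually-dependent configuration (a node $v \in V_0$ and $v' \in V_1$ with $(v,v') \in A_{01}$ and $(v',v) \in A_{10}$) that can bootstrap functionality, so without one being completed, $\alpha_i(v) = 0$ for every $v$ and no utility has been generated yet.

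First I would formalize this bootstrapping claim: before the first support pair is fully saturated, the set of functional nodes is empty, hence any resources spent before that moment contribute zero to $U_{P^*}$ up to that point, and moreover the subset of nodes saturated is the only thing that matters for the future (the order among them before functionality kicks in is irrelevant, as in Lemma~\ref{support_pair_ordering}). Then I would take $P^*$ and let $t_i$ be the first time any node becomes functional; by the preceding claim the node (and its partner) saturated at $t_i$ must form a support pair, call it $(v, v')$. Now suppose, for contradiction, that $P^*$ saturated some node $w$ strictly before $t_i$ where $w$ is not in any support pair with another node saturated by that time. The idea is to construct a modified plan $P'$ that defers all resources spent on such ``premature'' non-support-pair nodes, reallocating them instead toward completing a support pair as early as possible; since none of those deferred assignments produced any functionality before $t_i$ anyway, $P'$ makes some support pair functional no later than under $P^*$, and from then on $P'$ can mimic $P^*$ (finishing the deferred nodes afterward), yielding $U_{P'} \geq U_{P^*}$ with strict inequality unless the premature saturations were vacuous — contradicting optimality (or, in the weak-inequality case, showing that an optimal plan of the desired form exists).

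The step I expect to be the main obstacle is the exchange/reallocation argument: I need to show that redirecting the ``wasted'' resources genuinely lets a support pair be completed at least as early, which requires care because the demands $d(\cdot)$ are arbitrary natural numbers and the resource function $r(t_i) = C$ is fixed per step, so I cannot split a support pair's completion arbitrarily — I must invoke the concentrating-assignment results (Lemma~\ref{concentating_is_good_in_stars} and Theorem~\ref{optintree}) to justify that the fastest way to reach functionality is to concentrate on completing one support pair, and then count steps via ceilings $\ceiling{(d(v)+d(v'))/C}$ as in the earlier lemmas. A secondary subtlety is handling the possibility that $P^*$ never saturates a support pair at all (then $U_{P^*} = 0$ and the statement is trivially satisfiable by any plan, or the instance has no functional configuration), and the edge case where the first two saturated nodes happen to both lie in support pairs with \emph{third} nodes rather than with each other — here I would argue that we may always relabel/choose the witnessing pair so that the two earliest-saturated nodes are exactly a mutually supporting pair, using again that pre-functionality ordering is free.
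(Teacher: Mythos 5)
Your proposal is correct and follows essentially the same route as the paper: an exchange argument that swaps the prematurely saturated non-pair node $w$ with the first support pair, resting on the observation that no node is functional (hence no utility accrues) before some support pair is fully saturated, so deferring $w$ and advancing the pair strictly increases the utility accumulated up to the pair's completion while leaving the rest of the plan unchanged. The paper's version only treats a single node $w$ saturated immediately before the pair, so your more general deferral of all premature saturations and your attention to the ceiling/concentration subtleties is a refinement of, not a departure from, that argument.
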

\begin{proof}
\ifjournal
For contradiction, assume a node $w \in V_0$ was a node saturated at first by the optimum recovery plan $P^*$, and the two nodes $v, v'$ in a support pair will be recovered right after $w$. Without loss of generality, it is assumed that $v$ is saturated first from Lemma \ref{support_pair_ordering}. Then, the total utility until the step $t_i$ when $v'$ is saturated is $u(v) + u(v) + \beta u(w)$, where $\beta = 1$ iff $w$ is adjacent to $v$ or $v'$; otherwise, 0. 

However, another recovery plan $P'$, which saturates $v$ and $v'$ first and $w$ later, provides the total utility until $t_i$ of $2(u(v) + u(v')) + \beta u(w)$, since $v$ and $v'$ are already functional at $t_{i-1}$. This contradicts the fact that $P^*$ is the optimum.
\end{proof}

\begin{lemma}
In the one-layered rooted tree where any node adjacent to the node $u \in V_0$ has utility of zero: $d(v) = 0\ (\forall v \textrm{ s.t. } (u, v) \in A_{01})$, the second node $v_2$ recovered by the optimum recovery plan $P^*$ has utility strictly greater than zero: $d(v_2) > 0$.
\label{no_two_consecutive_zeros}
\end{lemma}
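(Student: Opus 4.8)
The plan is a proof by contradiction based on a local exchange, in the same spirit as the argument behind Lemma~\ref{saturation_is_bad}. By Theorem~\ref{optinrootedtree}, on a one-layered rooted tree the optimum plan $P^*$ recovers the nodes one at a time, always concentrating a step's resources on a single node adjacent to the current functional frontier; so I would reason about the recovery sequence $v_1, v_2, \dots$ that $P^*$ induces and show that $v_2$ cannot be a worthless (zero-utility) node.

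First I would pin down $v_1$. By Lemma~\ref{support_pair_first} the first two nodes saturated by $P^*$ form a support pair, and since $V_0=\{u\}$ in a one-layered rooted tree this pair must be $u$ together with one of its neighbours; once that neighbour is saturated both it and $u$ become functional, so it is the first recovered node $v_1$, and by hypothesis $u(v_1)=0$. (If some neighbour of $u$ has zero demand then $u$ is already functional at $t_0$ and the remaining argument only simplifies, so assume not.) Now suppose, for contradiction, that $v_2$ also has zero utility. Let $w$ be the first node of strictly positive utility in the recovery sequence of $P^*$; if none exists the claim is vacuous, so let $w$ sit at position $p\ge 3$. Every node at positions $2,\dots,p-1$, in particular $v_2$, has zero utility and therefore contributes nothing to $U_{P^*}$ over the whole horizon. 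I would then form a plan $P'$ by promoting $w$ to position $2$ and pushing the zero-utility nodes formerly at positions $2,\dots,p-1$ back by one; since the instance is a tree, the proper ancestors of $w$ lying strictly below $u$ are among those zero-utility nodes and can still be scheduled before $w$, so $P'$ is again a feasible sequence of concentrating, frontier-respecting assignments. Since $w$ is the only moved node with nonzero utility, $P'$ accrues $u(w)>0$ extra utility on each step on which $w$ is functional earlier than under $P^*$ and loses nothing elsewhere, so $U_{P'}>U_{P^*}$, contradicting optimality. Hence $u(v_2)>0$.

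The hard part is the exchange step: after $v_1$ has been forced, one must actually exhibit a legal reordering that drags a positive-utility node up into the second slot without violating the ``adjacent to the functional frontier'' constraint of Theorem~\ref{optinrootedtree}. This is precisely where the structure of the instances the lemma is applied to comes in: in the one-layered tree produced by the conversion of Theorem~\ref{general_are_onelayer}, the zero-utility nodes adjacent to $u$ sit directly above utility-bearing nodes, so $w$ may be taken to be a child of $v_1$ and the reordering is immediate; in a tree with a long chain of zero-utility cut nodes above every positive-utility node the swap could be blocked, so the argument genuinely relies on a positive-utility node being reachable within one recovery after $v_1$.
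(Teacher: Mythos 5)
Your overall strategy --- assume $v_2$ has zero utility, locate the first positive-utility node $w$ in the recovery sequence, and derive a contradiction by an exchange that recovers $w$ earlier --- is exactly the strategy of the paper's proof. The paper likewise fixes $v_1$ as a zero-utility neighbour of $u$, lets $v_k$ be the first recovered node with positive utility, and argues by reordering. Your main case, where $w$ is adjacent to $v_1$ and can simply be swapped into position $2$, coincides with the paper's scenario (1), and your accounting of the gain (only $w$ moves earlier, the displaced nodes contribute nothing, and nodes after position $p$ keep their recovery times because the total demand consumed up to that point is unchanged) is the right justification for $U_{P'}>U_{P^*}$.

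The gap is in the other case, where the zero-utility node that makes $w$ reachable from $u$ is not $v_1$. Your reordering is internally inconsistent there: you promote $w$ to position $2$ and simultaneously claim that the proper ancestors of $w$ below $u$ ``can still be scheduled before $w$,'' but a node in position $2$ has only $v_1$ before it, so the move is feasible only when $w$ is a child of $v_1$. You then concede in your final paragraph that the swap ``could be blocked,'' which leaves the stated lemma unproved in that case rather than proved. The paper closes this case with a second exchange: if $v_k$ is enabled by some zero-utility node $v_j$ with $2\le j<k$, first swap $v_1$ and $v_j$ so that $v_j$ is recovered first, after which $v_k$ is adjacent to the first recovered node and scenario (1) applies; you should add this step (or an equivalent induction on the position of $w$'s enabler). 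That said, your closing observation is not baseless: the second exchange itself silently requires $v_j$ to be adjacent to $u$, i.e., $v_k$ to sit at depth two, which holds in the one-layered graphs produced by the conversion of Theorem \ref{general_are_onelayer} where the lemma is actually applied, but not in an arbitrary one-layered rooted tree with a long chain of zero-utility cut nodes above every positive-utility node. So your criticism of the lemma's generality is fair, but as submitted the argument does not establish the statement as written.
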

\begin{proof}
\ifjournal
%First, let us review that the recovery of a one-layered rooted tree satisfies the following properties.
%\begin{itemize}
%\item There is only one node $u \in V_0$ that is saturated.
%\item any node adjacent to the node $u \in V_0$ has utility of zero: $d(v) = 0\ (\forall v \textrm{ s.t. } (u, v) \in A_{01})$.
%\end{itemize}
All the nodes adjacent to $u \in V_0$ have utility of zero. Therefore, the first node $v_1$ recovered by the $P^*$ is one of these node. For contradiction, assume the second node $v_2$ is also one of these zero-utility nodes, and let $v_k$ be the first node recovered, whose utility is greater than 0 ($k$-th node recovered in the plan).

In order to recover $v_k$, it is necessary to have a zero-utility node that is already recovered for the reachability to $u$. There are two possible scenarios: (1) $v_1$ is adjacent to $v_k$, or (2) $v_j\ (2 \leq j < k)$ is adjacent to $v_k$.

For the first scenario, we can exchange the recovery order of $v_2$ and $v_k$. This exchange has no influence on the candidate nodes at each step after $k$-th recovery, because the recovered nodes until $k$-th recovery stay the same. However, it increases the utility and contradicts the fact that $P^*$ is optimum.

For the second scenario, we can exchange the recovery order of $v_1$ and $v_j$. Again, this does not change any candidate sets for recovery after $k$-th recovery. Since $v_j$ is recovered at the very beginning, we can use the same discussion with the first scenario. Therefore, it provides a contradiction. Therefore, the second node recovered in the optimum plan should have nonzero-utility.
\end{proof}

%-------------------------------------------------------------------------------------
\begin{figure*}[t]
\centering
\begin{minipage}{0.24\textwidth}
\includegraphics[width=\textwidth]{\figpath 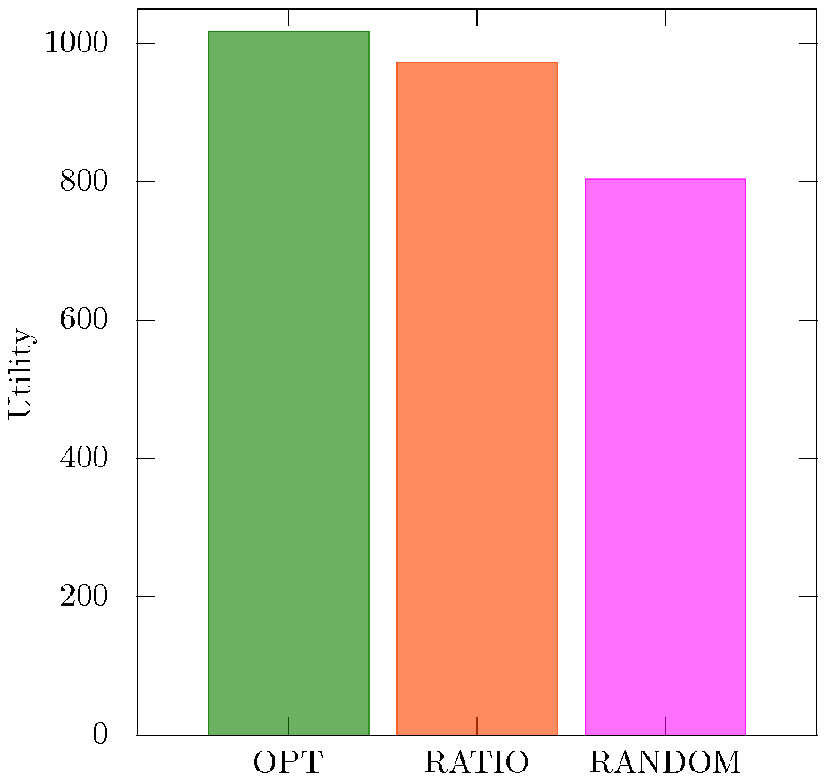}
\caption{Total Utility under Compliant Settings (the IBM Network): RATIO could achieve the near-optimal solutions.\label{ibm_general_result}}
\end{minipage}\hfill
\begin{minipage}{0.24\textwidth}
\includegraphics[width=\textwidth]{\figpath 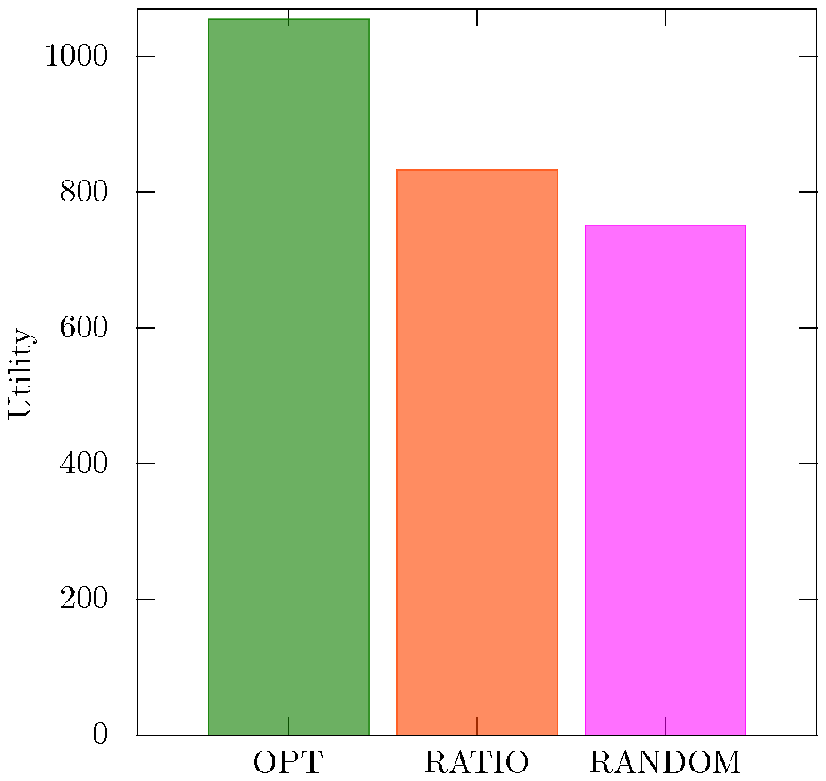}
\caption{Total Utility under Adversarial Settings (the IBM Network): The performance of RATIO is easily deteriorated with small changes of the attributes of some nodes. \label{ibm_ad_result}}
\end{minipage}\hfill
\begin{minipage}{0.48\textwidth}
\centering
\includegraphics[width=0.9\textwidth]{\figpath 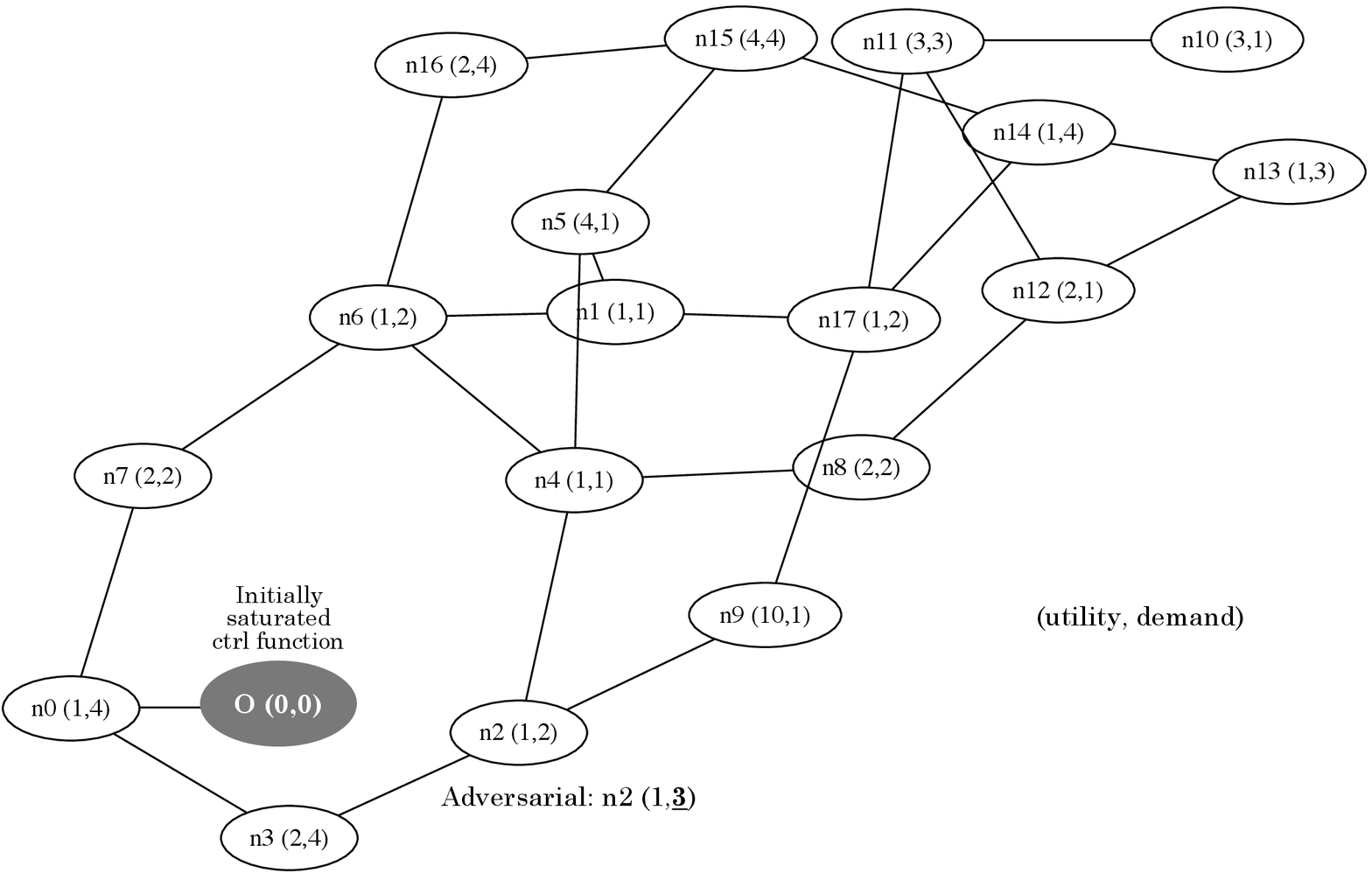}
\caption{The IBM Network with Node Attributes for Preliminary Simulations: The node attributes inside of each node are used for the compliant scenario. In an adversarial scenario, node n2 is intentionally attacked to make its demand 3. \label{ibm_settings}}
\end{minipage}
\end{figure*}
%-------------------------------------------------------------------------------------

\begin{theorem}
A progressive recovery problem with any general graph with $u(v \in V_0) = 0$ has an equivalent progressive recovery problem with a one-layered graph. 
\label{general_are_onelayer}
%\ifjournal
%$\tuple{\forall N, F = V, \forall d, \forall u\ (u(v \in V_0) = 0), \forall r} \sim \tuple{\forall N' = (\{x\} \cup V'_1, A'), F = V, \forall d\ (d(v \in \hat{V}_1) = 0), \forall u, \forall r}$.
%\fi
\end{theorem}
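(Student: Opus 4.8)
The plan is to give an explicit, polynomial-time construction of a one-layered instance $N'$ from the given instance $N$ (which satisfies $u(v)=0$ for all $v\in V_0$) and then to show that the two instances have the same optimal network-wide utility, with optimal recovery plans in correspondence.

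\textit{Construction.} I would introduce a single aggregated node $x$ with $d(x)=0$ and $u(x)=0$ and let $\{x\}$ be the node set of a fresh layer $G_0'$; all original nodes of $V_0\cup V_1$ become the nodes of $G_1'$, and every original intra-edge of $E_{00}\cup E_{11}$ together with every original dependency arc of $A_{01}\cup A_{10}$ is retained, now as an intra-edge of $G_1'$ (see Figure~\ref{aggregate_x}). Finally, for each support pair $(v,v')$ of $N$ with $v\in V_0$ and $v'\in V_1$, I would add the dependency arcs $(x,v)\in A_{01}'$ and $(v',x)\in A_{10}'$, so that $x$ supports the $G_0$-side and is supported by the $G_1$-side of every support pair; equivalently, $x$, $v$ and $v'$ lie on a directed dependency cycle in $N'$. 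Since $d(x)=0$, the node $x$ is saturated at $t_0$, so by Definition~\ref{def-one-layer} the instance $N'$ is one-layered, and the conversion is clearly polynomial.

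\textit{Equivalence of the two instances.} Because $x$ has zero demand and zero utility, no sensible plan assigns resources to $x$ and $x$ never contributes to $U_P$; thus a recovery plan for $N$ is also a recovery plan for $N'$ once a zero column for $x$ is appended, and conversely. The heart of the argument is to verify that along any optimal (indeed any support-pair-first) plan the two instances realize the same utility at every time step. The only place where the sets ``functional in $N$'' and ``functional in $N'$'' can differ is the bootstrap phase, before any node is functional. By Lemma~\ref{support_pair_first} the first two nodes saturated by an optimal plan on $N$ form a support pair $(v,v')$, and by Lemma~\ref{support_pair_ordering} we may assume $v$ is saturated first; in $N$ no utility accrues until both $v$ and $v'$ are saturated, that is, until $d(v)+d(v')$ units of resource have been spent, after which the utility jumps by $u(v)+u(v')=u(v')$. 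In $N'$ the dependency cycle through $x$ makes $x$, $v$ and $v'$ become functional simultaneously and only once both $v$ and $v'$ are saturated --- again after $d(v)+d(v')$ units --- at which point the utility jumps by $u(x)+u(v)+u(v')=u(v')$. From that step on, $N'$ is one-layered, the functional core and the remaining subgraph coincide in the two instances, and by Theorem~\ref{optintree} an optimal continuation in $N'$ concentrates resources onto neighbors of functional nodes --- and the same continuation is optimal in $N$. Lemma~\ref{no_two_consecutive_zeros} is invoked to exclude the wasteful possibility of an optimal plan on $N'$ spending two consecutive steps on zero-utility nodes during or immediately after the bootstrap, which pins the optimal bootstrap in $N'$ down to exactly one support pair. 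Combining these facts yields $U_{P^*}(N)=U_{P^*}(N')$ and a bijection between their optimal plans.

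\textit{Main obstacle.} The delicate point is to show that attaching $x$ neither creates a ``shortcut'' --- some node of $N'$ becoming functional after strictly fewer resources than the matching node of $N$ --- nor a handicap. This is exactly why $x$ must be wired symmetrically into each support pair (so that it becomes functional only once a full support pair is saturated) rather than, say, biconnected to a single node, and it is why the three support-pair lemmas are needed: Theorem~\ref{optintree} describes the optimum of a one-layered graph only once some node is already functional, so the bootstrap phase must be handled by a separate combinatorial argument. Two further technicalities are that an optimal plan is free to choose which of several support pairs to bootstrap with --- the construction must (and does) leave every support pair available --- and that the whole argument has to survive when $N$, and hence $G_1'$, contains cycles rather than being a tree.
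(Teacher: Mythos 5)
Your overall strategy is the paper's: introduce a single zero-demand, zero-utility node $x$ as a fresh layer $G'_0$, push everything else into $G'_1$, and use Lemmas \ref{support_pair_ordering}--\ref{no_two_consecutive_zeros} to argue that the bootstrap phase of an optimal plan costs the same in both instances. However, the way you wire $x$ in opens exactly the ``shortcut'' you set out to exclude. You attach $x$ to each support pair $(v,v')$ by the arcs $(x,v)\in A'_{01}$ and $(v',x)\in A'_{10}$ and assert that the resulting dependency cycle forces $x$, $v$, $v'$ to become functional only once \emph{both} $v$ and $v'$ are saturated. That assertion does not follow from the functionality definition: a node is functional as soon as it is saturated and reachable from \emph{at least one} saturated supporting node in the other layer via functional nodes --- nothing requires every node on a dependency cycle to be saturated. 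Since $d(x)=0$, the node $x$ is saturated at $t_0$ and is a supporting node (it supports $v$), and $v'$ is adjacent to $x$ through the arc $(v',x)$; hence as soon as $v'$ alone is saturated, $\{x,v'\}$ is a consistent mutual-support bootstrap and both become functional. In $N'$ the utility $u(v')$ therefore starts accruing after only $d(v')$ units of resource, whereas in $N$ it requires $d(v)+d(v')$ units, so whenever $d(v)>0$ and $u(v')>0$ the two instances have different optimal values and the claimed equivalence fails.

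The paper avoids this by connecting $x$ instead to the \emph{zero-utility} nodes $\hat{V}'_1$ (the original $V_0$ nodes), which are the ones carrying the demand $d(v)$: $x$ can only bootstrap together with some saturated $v\in\hat{V}'_1$, and a positive-utility node $v'$ still has to be saturated \emph{and} reach $x$ through such a functional $v$, so the $d(v)+d(v')$ price of the first recovery is preserved; Lemma \ref{no_two_consecutive_zeros} is then what pins the second recovered node down to a positive-utility neighbor of $v$. Your wiring could be repaired if arcs were traversable only in the supporting direction (so that $x$ cannot ``reach'' $v'$ backwards through $(v',x)$), but the model never restricts reachability that way, and you would have to state that convention explicitly and re-prove the bootstrap claim under it. As written, the construction does not yield an equivalent instance; the rest of your argument (the correspondence of utilities after the bootstrap, the appeal to Theorem \ref{optintree}, and the identification of the main obstacle) is sound and matches the paper once the wiring is corrected.
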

\begin{proof}
\ifjournal
The problem with a general graph is converted into the problem with a one-layered graph as follows. We add a new node $x$ to $G'_0$ and put all the nodes and edges in the original $N$ into $G'_1$. An edge is added between $x \in V'_0$ and each $v \in \hat{V}'_1$, where $\hat{V}'_1$ consists of nodes that are originally in $V_0$ of $N$; i.e. $u(v \in \hat{V}'_1) = 0$. Figure \ref{aggregate_x} illustrates an example of constructing a one-layered graph from the graph shown in Figure \ref{supportpair}.

Lemma \ref{support_pair_first} shows that the first two nodes to be saturated (recovered) are the ones in a support pair. Also, according to Lemma \ref{support_pair_ordering}, it can be assumed without loss of generality that a node $v$ in $V_0$ in each support pair is the first node to be saturated. 

The edges newly added confirm that the first node recovered is one of the nodes in $V_0$, since $x$ is the only saturated node in the initial step. The other correspondence between two problems to be checked is that the second node recovered in $N'$ is $v'$ that forms a support pair with $v$ in the original graph $N$, and Lemma \ref{no_two_consecutive_zeros} guarantees this. 
\else
It follows from Lemma \ref{support_pair_ordering}-\ref{no_two_consecutive_zeros}. Please see our technical report \cite{ourtechrep} for details.
\fi
\end{proof}

Therefore, it is enough to think about the cases of one-layered graphs. Also, it is possible to aggregate multiple nodes in $G_0$ into one logical node in $G_0$ to decide the resource assignment, as the proof of Theorem \ref{optintree} suggests. Thus, without loss of generality, the rest of this paper only deals with the one-layered graphs with one node in $G_0$.

%==============================================================
\section{Heuristic for Progressive Recovery and Its Limitation \label{ratio_section}}
This section describes how a simple heuristic algorithm named RATIO performs under (1) compliant failure scenarios into which most of random failures fall and (2) intentional failure settings that could be made by a small change to the compliant cases. Note that the term \textit{compliant} is used to describe the situations where there is no adversarial weight settings defined in Section \ref{limitofratio}.

\subsection{RATIO Heuristic and its Performance}
RATIO is a greedy heuristic algorithm inspired by the approximation algorithm of the set cover problem. This heuristic assigns resources to the most cost-effective nodes among the nodes adjacent to functional nodes at each time step by calculating $\frac{u(v)}{d(v)}$. Algorithm \ref{ratio_pseudo} shows the pseudo code of RATIO.

In order to understand the performance of RATIO, a preliminary simulation has been conducted using the IBM network \cite{ibmgraphcite}. Figure \ref{ibm_settings} shows the settings of node attributes used for the simulation. In the simulation, it is assumed that one unit of resource becomes available at each time step, and n0 is the only node that is connected to the initially saturated control function node $O$ that belongs to $G_0$.

%----------------------------------------------------------------------------------------
\def\rnd{\mathrm{randomInt}}
\begin{algorithm}[t]
\caption{RATIO$\left(G, F_{t-1}\right)$}
\label{ratio_pseudo}
\begin{algorithmic}[1]
\REQUIRE A graph $G = (V, E)$, A set $F_{t-1}$ of nonfunctional nodes at time $t-1$  
\STATE $W = \mathrm{neighbor}(V \setminus F_{t-1})$
\STATE Sort $W$ based on $\dfrac{u(v_i)}{d(v_i)}$ in the decreasing order
\STATE Perform concentrating allocation from the head of $W$
\end{algorithmic}
\end{algorithm}
%----------------------------------------------------------------------------------------

Figure \ref{ibm_general_result} illustrates the total utility obtained by each method. Note that OPT is the optimum total utility, which is calculated by DP-OPT in Appendix \ref{dpopt}, and RANDOM is a heuristic algorithm that randomly selects one of the nonfunctional nodes adjacent to functional nodes. As will be understood, RATIO achieves a near-optimal result with the compliant setting. While we do not include the results in other topologies here to avoid redundancy, RATIO shows similar performances in other preliminary simulations conducted with compliant settings. This fact will be reviewed in the evaluation section (Section \ref{eval_section}) again.

%In Figure \ref{ibm_ad_result}, the total utilities obtained by RATIO and RANDOM heuristic algorithms in IBM network \cite{digex} are illustrated with the optimum total utility, where some settings util, demand, r???????. Note that RANDOM is a heuristic algorithm that randomly selects one of the nonfunctional nodes adjacent to functional nodes. 

%-------------------------------------------------------------------------------------
\def\scale{0.27}
\begin{figure}[t]
\centering
\includegraphics[width=\scale\textwidth]{\figpath 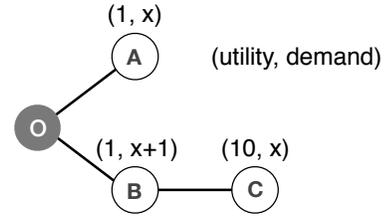}
\caption{An Adversarial Toy Example: The worse utility-demand ration of node B hides, from RATIO, node C that potentially produces higher overall utility, even when compensating for the loss by selecting node B.\label{adversarialeg}}
\end{figure}
%-------------------------------------------------------------------------------------

\subsection{Adversarial Example and the Limitation \label{limitofratio}}
Figure \ref{adversarialeg} illustrates a minimal adversarial setting for the RATIO heuristic. Suppose that one unit of resource is available at each time step $(r=1)$ and node O is a saturated function node. Note that all the other nodes are nonfunctional at the beginning, and their demands are depicted in the figure.

At the first round of recovery, RATIO chooses node A between node A and B, since $\frac{1}{x} > \frac{1}{x+1}$ $(x \geq 1)$. It takes $\frac{x}{r (= 1)}$ time steps to recover $A$. Then, RATIO recovers node B and C in order, which takes $x+1$ and $x$ steps, respectively. Therefore, the total utility of RATIO is always $u(A) \cdot (2x+1+1) + u(B) \cdot (x+1) + u(C) \cdot 1 = 3x +13$.

In contrast, the optimum strategy is recovering node B and C first, and then node A. In this case, the total utility is $u(B) \cdot (2x + 1) + u(C) \cdot (x + 1) + u(A) \cdot 1 = 12x + 12$. When $x$ becomes larger, the total utilities of RATIO and the optimum will diverge more drastically. 

In general, when a node $v$ (node C in Figure \ref{adversarialeg}) has larger utility in a network and all the neighbors of $v$ (node B in Figure \ref{adversarialeg}) show lower effectiveness relative to the other nodes (node A in Figure \ref{adversarialeg}), that part of the network could cause the adversity similar to the previous example.

Since the minimal adversarial example is quite simple, a failure incident in a larger network could contain it as an embedded substructure.
%As the adversarial example is not so complicated, a failure incident in a larger network could contain such a scenario as a substructure in the network. 
Furthermore, it could be said that RATIO is vulnerable to failure events by malicious attacks, since an attacker can easily embed this substructure in an attack and deteriorate the interim utility arbitrarily by setting $x$ as large as possible. This scenario can be interpreted as an intentional attack where an attacker tries to hide a node with a high utility value $v$ from RATIO by imposing more damages to the neighbors of $v$.
%RATIO shows deterministic behaviors given a tie-breaking policy among nodes with the same effectiveness.

Figure \ref{ibm_ad_result} shows a preliminary result of progressive recovery in the IBM network with an adversarial substructure. The adversarial failure is realized by a change in the demand of n2, as shown in Figure \ref{ibm_settings}. The change could be interpreted as more damage caused by an intentional attack to a specific node. The result indicates that the performance of RATIO is degraded by the adversity. RATIO only achieves 79.0\% of the optimum with the adversarial setting, although it approximately reaches 95.6\% of the optimum with the previous compliant scenario. It is noteworthy that a slight change of the demand of one node (from 2 to 3) can worsen the performance of RATIO to this extent. Based on the argument with the toy example and the preliminary results, it is deducible that RATIO may be worse than RANDOM in some adversarial scenarios. This discussion motivates us to introduce the following deep reinforcement learning-based algorithm that demonstrates the robustness against such adversarial scenarios.

%==============================================================
\section{DeepPR: Reinforcement Learning for Progressive Recovery \label{DeepPR_section}}
A Deep reinforcement learning algorithm for the Progressive Recovery problem (DeepPR) is explained in this section. DeepPR has its roots in an established reinforcement learning method called Deep Q-Learning. Here, the key concepts of the technique are summarized. Additionally, we describe how to connect the RATIO heuristic to deep reinforcement learning to improve upon normal exploration methods.

\subsection{Q-Learning}
Reinforcement Learning (RL) is a method to learn the best mapping of states $\mathcal{S}$ to actions $\mathcal{A}$. The key elements of RL include the \textit{agent}, who learns the mapping of state action pairs to numerical rewards for its trial actions, and the \textit{environment}, which updates states and returns the numerical \textit{reward} depending on actions the agent takes.

In Q-learning, the mapping is learned using the \textit{action-value function} $Q: \mathcal{S} \times \mathcal{A} \rightarrow \mathbb{R}$ that represents the quality of each state-action pair. In theory, the Q-value of a state-action pair converges to $Q^*(s,a)$ after infinite trial actions (experiences): $Q^*(s, a) = \max_\pi \mathbb{E}[\sum_{k = 0}^{\infty} \gamma^k r_{t+k} \mid s_t = s, a_t = a, \pi]$, which is the expected reward achievable by following the optimum action sequence (policy) $\pi$ from state $s$ taking action $a$ at time $t$. Note that $\gamma$ is a discount factor for future rewards that defines the learning horizon. 

%For each experience, the update of the Q-value is performed by $Q(s_t, a_t) \gets Q(s_t, a_t) + \alpha \left( y_t - Q(s_t, a_t)\right)$, where $y_t = r_{t+1} + \gamma \max_{a_{t+1}}Q(s_{t+1}, a_{t+1})$, and $\alpha$ is a learning rate. $y_t$ is called the \textit{target}, since $y_t$ should be equal to $Q(s_t, a_t)$ by convergence.

\subsection{Deep Q-Network (DQN)}
Mnih et al. \cite{mnih2015human} report a significant improvement in RL by introducing the Deep Q-Network (DQN). Instead of explicitly calculating Q-values, a DQN uses neural networks (NNs)---parametrized by a weight function $\theta$---as a function approximator to estimate the optimum Q-values: $Q^*(s, a) \approx Q(s, a; \theta)$.

The dramatic improvement by DQNs in learning performance is achieved mainly by introducing \textit{experience replay} and \textit{Target-Net} \cite{mnih2015human}. The $\epsilon$-greedy exploration method was also employed to effectively explore the state-action space.

%In addition, we employed a state-of-the-art exploration method called \textit{Random Network Distillation} (RND)  \cite{burda2018exploration} to effectively explore our huge state space.

\subsubsection{Experience Replay} It is known that the correlation among experiences, which each are represented by quadruples $e_t = (s_t, a_t, r_t, s_{t+1})$ of a state, action, reward at time $t$, as well as a resulting state at time $t+1$, causes fluctuations in the learning process. Experience Replay buffers all the experiences $B = \{e_t\}$ and takes random samples from $B$ for the Q-value updates. This random sampling prevents DQNs from undergoing fluctuation in training due to learning from correlated sequential experiences.

\subsubsection{Target-Net and Eval-Net} In order to stabilize the learning, it is proposed to use two separate DQNs; one named Eval-Net for learning from each sampled experience, and the other, named Target-Net, for calculating the target Q-values. The weight function $\theta_T$ of Target-Net is periodically updated by copying the weight function $\theta$ of Eval-Net.

For each sampled experience $e_t$, the parameters of Eval-Net are updated by any gradient method with respect to the loss function $L(\theta)$, which represents the difference between the Q-values estimated by Eval-Net and Target-Net.
%\begin{align*}
%L(\theta)= \mathbb{E}_{e_t \sim U(B)} \biggl[\Bigl( r_t &+ \gamma \max_{a_{t+1}} Q(s_{t+1}, a_{t+1}; \theta_T)\\
%&- Q(s_t, a_t; \theta)\Bigr)^2\biggr],
%\end{align*}
\begin{align*}
L(\theta)= &\mathbb{E}_{e_t \sim U(B)} \biggl[\Bigl( r_t + \gamma \max_{a_{t+1}} Q(s_{t+1}, a_{t+1}; \theta_T)\\
 &- Q(s_t, a_t; \theta)\Bigr)^2\biggr],
\end{align*}
where $e_t \sim U(B)$ indicates the random sampling of $e_t = (s_t, a_t, r_t, s_{t+1})$ from the buffer $B$ by the experience replay.
%\subsubsection{Random Network Distillation (RND)} The tradeoff between exploration and exploitation is one of the crucial challenges in RL. RND encourages exploring states unvisited before by setting \textit{intrinsic} rewards, which is the metaphor for curiosity about new environments. Specifically, another NN (Predictor-Net) is trained with the MSE function $||Q_\mathrm{predict} - Q_\mathrm{eval}||$ so that it approximates Eval-Net. Since this MSE becomes larger in unvisited states than visited states, exploration to those unvisited states is accelerated by adding the MSE to the original (extrinsic) reward, which is directly calculable for a given state.

%%-------------------------------------------------------------------------------------
%\def\scale{0.50}
%\begin{figure}[th]
%\centering
%\includegraphics[width=\scale\columnwidth]{\figpath 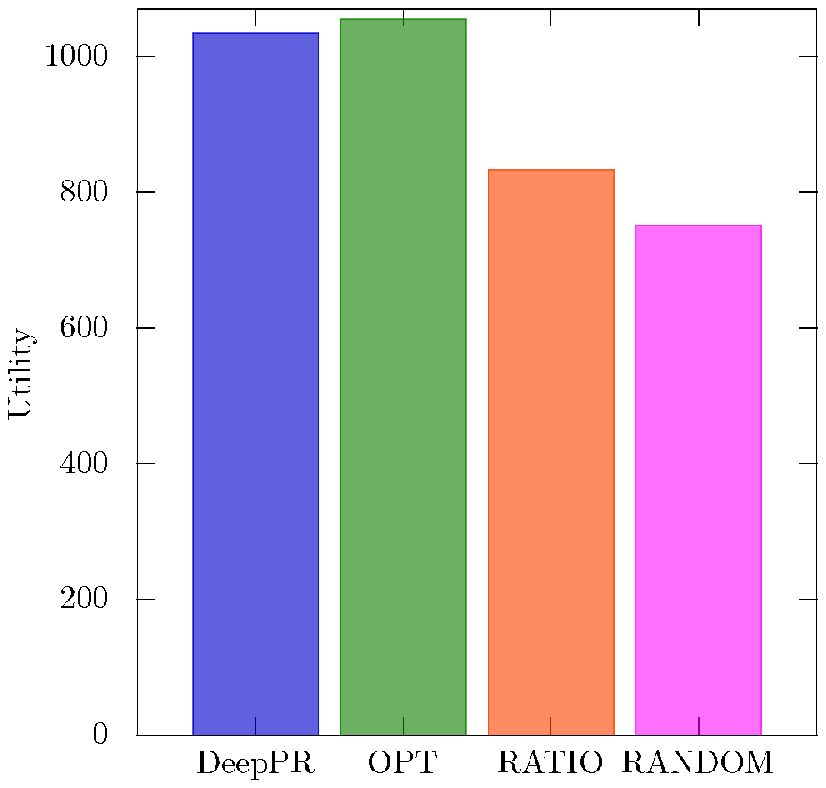}
%\caption{IBM adversarial.\label{}}
%\end{figure}
%%-------------------------------------------------------------------------------------
%
%%-------------------------------------------------------------------------------------
%\def\scale{1}
%\begin{figure}[th]
%\centering
%\includegraphics[width=\scale\columnwidth]{\figpath 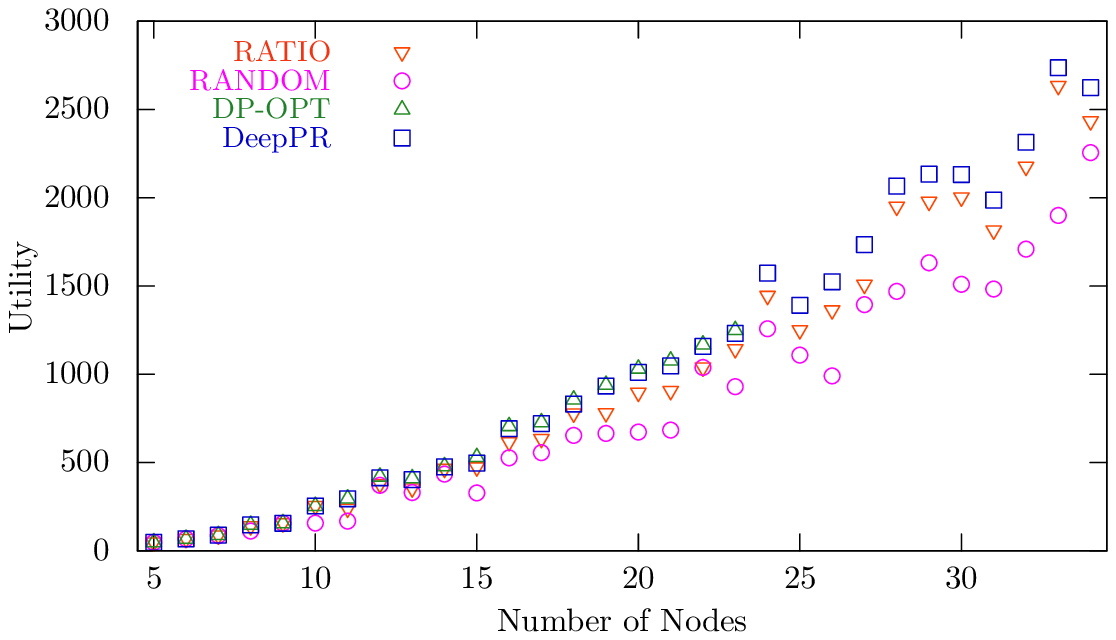}
%\caption{GNP adversarial.\label{}}
%\end{figure}
%%-------------------------------------------------------------------------------------

 %-------------------------------------------------------------------------------------
\begin{figure*}[t]
\centering
\begin{minipage}[t]{0.38\textwidth}
\centering
\includegraphics[width=1\columnwidth]{\figpath 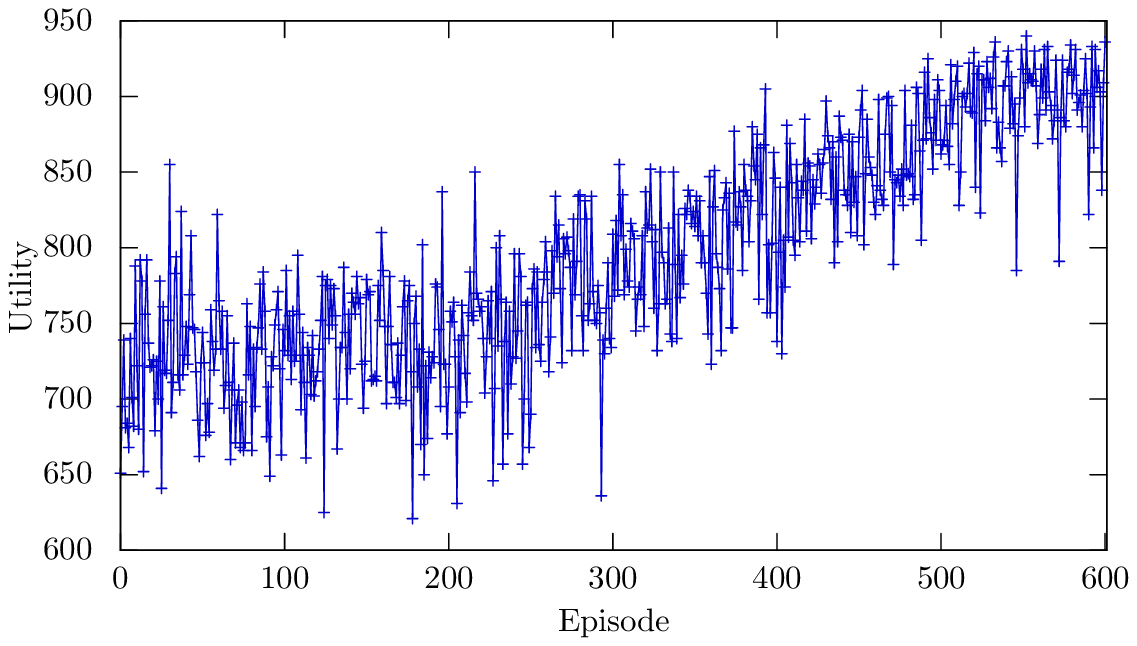}
\caption{Learning Curve of DeepPR: A GNP random graph is given a compliant attribute setting. \label{lc}}
\end{minipage}\hfill
\begin{minipage}[t]{0.38\textwidth}
\centering
\includegraphics[width=1\columnwidth]{\figpath 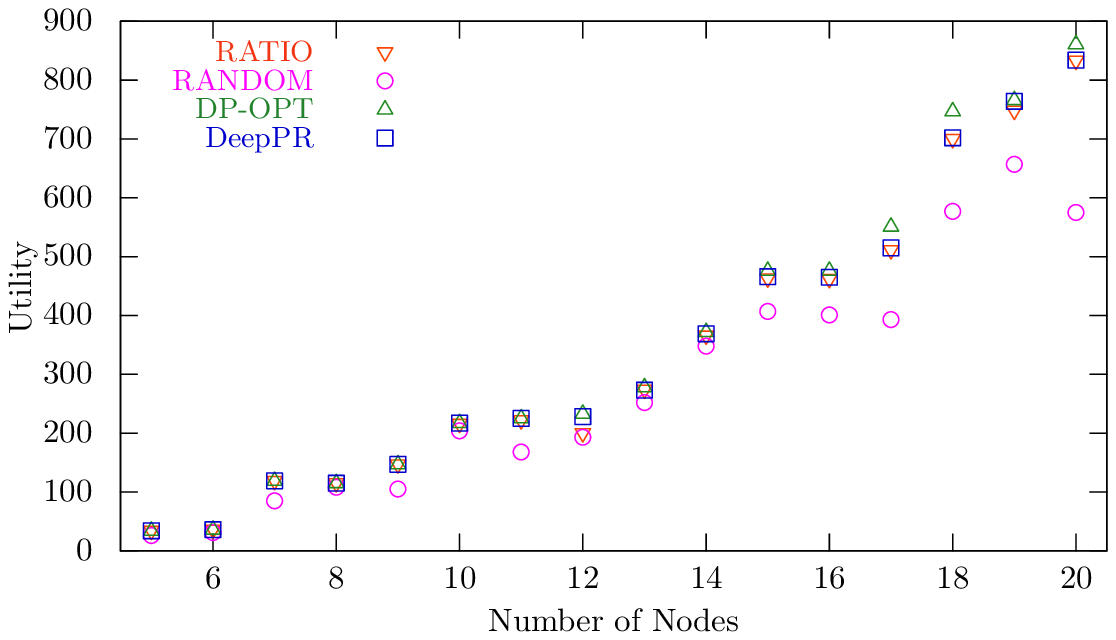}
\caption{Utility under Compliant Settings (GNP Random Graphs): Both DeepPR and RATIO approach the theoretical optimum, while DeepPR demonstrates slightly better results. \label{gnp_result}}
\end{minipage}\hfill
\begin{minipage}[t]{0.21\textwidth}
\centering
\includegraphics[width=1\columnwidth]{\figpath 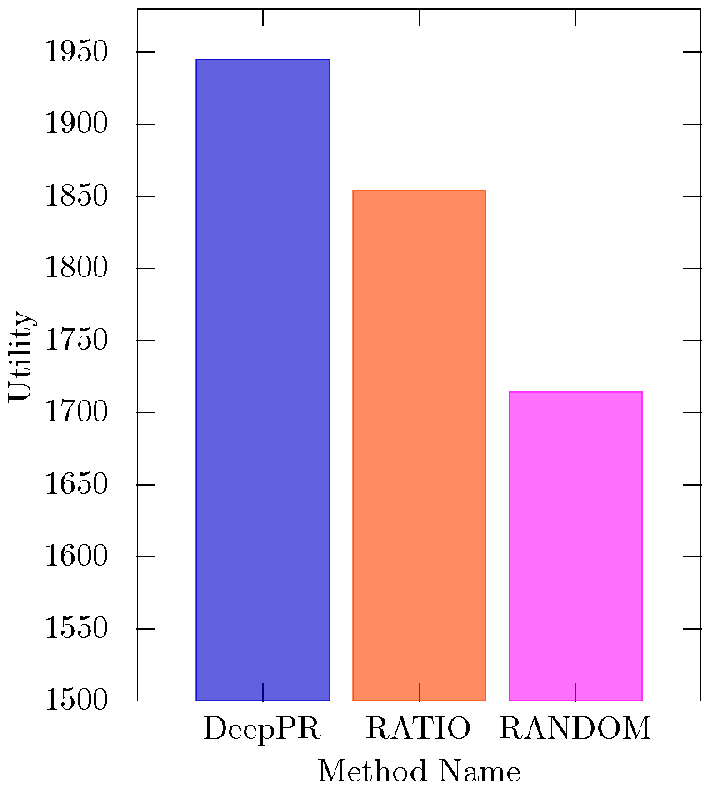}
\caption{Utility under Compliant Settings (the BT North America Graph). \label{BTNA_result}}
\end{minipage}
\end{figure*}
%-------------------------------------------------------------------------------------

\subsubsection{$\epsilon$-greedy Exploration}  The tradeoff between exploration and exploitation is one of the crucial challenges in RL. The $\epsilon$-greedy exploration is a commonly used approach to address this challenge. In this greedy approach, the agent follows the current best action known in a current state to reinforce the previous learning (exploitation) with probability $(1 - \epsilon)$. With probability $\epsilon$, it tries an exploration by taking an action that is not determined by the previous learning. 

DeepPR integrates two simple algorithms to realize the exploration; namely, RATIO and RANDOM. DeepPR chooses the best action following the RATIO heuristic with a predefined probability $\omega_{\mathrm{RATIO}}$ and selects a random action from a legal action set with probability $(1 - \omega_{\mathrm{RATIO}})$. Therefore, the probability for DeepPR to take an action based on RATIO is $\epsilon\, \omega_{\mathrm{RATIO}}$.

\subsection{Applying DQN to PR}
In our problem, the agent tries to learn the optimum resource allocations to nonfunctional nodes. Therefore, the legal actions for our agent are selecting a subset of nonfunctional nodes. Here, we assume a situation where at most one node is fully recovered at a time step by setting \conditionA, where $C$ is the amount of available resources at a time step. Therefore, each \textbf{action} at time step $t_k$ is represented as an ordered pair of nodes $[v_i, v_j]$, where the first node $v_i$ is assigned $\min\{r(t_k), d(v_i)\}$ resources, and the second node $v_j$ receives the remaining resources if they exist. The number of legal actions is always the number of 2-permutations of $V$, $P(|V|, 2)$. Each \textbf{state} is represented as a $(|V| \times 1)$ vector in which $i$th element indicates the remaining demand of the corresponding node $v_i \in V$. The \textbf{reward} of a state-action pair is the sum of utilities of the functional nodes.

One of the biggest challenges in our problem is the size of the state space, which grows exponentially in the number of nodes. For example, a graph consisting of 20 nodes with a minimal demand setting, where $d(v) \in \{0, 1\}\ (\forall v \in V)$, has over one million ($ \approx 2^{20}$) possible states, and the number of state-action pairs is approximately $2^{20} \times P(20, 2)$. In order to improve the performance of exploration, the integrated exploration, which comprises RANDOM and RATIO, is adopted in DeepPR with appropriate $\epsilon$ and $\omega_{\mathrm{RATIO}}$.

%In order to improve the performance of exploration in such a huge state space, we take a random action among a set of legal actions with probability $\epsilon$. 
%Note that a similar approach, which employees heuristics in $\epsilon$-greedy, is used in \cite{infocom_RL_placement}.

%In our placement problem, the state space and action space are very large. For example, the action space is exponential in the numbers of jobs, workers/PSs in jobs, and servers. Even with 6 jobs, 3 workers plus PSs each, and 6 servers, there are more than 100 million different ways of placement. The com- plexity of state space and action space often prevents (quick) convergence of RL to a good decision making policy [22], due to insufficient or ineffective exploration. Further, it leads to significant practical difficulty in collecting enough traces for DRL training, which contain reward samples corresponding to various deployment ways. 

%==============================================================
\section{Evaluations \label{eval_section}}
\ifjournal
Simulations are conducted with different topologies and node attributes, as explained below. DeepPR is also evaluated in both compliant and adversarial failures, being compared with the theoretical optimum (DP-OPT), RATIO, and RANDOM.
\fi

\subsection{Simulation Settings}
\subsubsection{Network Topology} 
%The grid graphs \cite{networkx}, 
GNP random graphs \cite{networkxGNP}, the BT North America graph \cite{BTNAmerica}, and the IBM graph \cite{ibmgraphcite} are used as network topologies. Since our theoretical results indicate it is enough to test the algorithm performance in one-layered graphs with single node in $G_0$, a node in $G_0$ is randomly selected in each graph. 
%The size of grid graphs is set from $3 \times 3$ to $10 \times 10$ so that the growth of the required computation time can be measured. 
For GNP random graphs, the following ranges are used: $p = 0.2$, and $n \in \{5, 6, ..., 20\}$ for compliant scenarios and $n \in \{5, 6, ..., 34\}$ for adversarial settings. Note that only connected GNP random graphs are fed into our simulations. The BT North America graph is based on an IP backbone network with 36 nodes and 76 edges. The IBM graph is a backbone network consisting of 18 nodes and 16 edges. 

%\begin{itemize}
%\item grid (networx) 3*3 to 10*10
%\item general random (networkx) 10 to 100
%\item actual topology (BT North America - http://www.topology-zoo.org/dataset.html)
%\end{itemize}

\subsubsection{Node Attributes and Available Resource} The utility, demand, and resource values are randomly selected among the integers within given ranges for GNP random and the BT North America graphs. Here, the following setting is used: (utility range, demand range, resource amount available at each time step)$\, = ([1,4], [1,2], 1)$. 
%Specifically, the following two different scenarios are used, since it is guessed that the higher distribution in the utility and demand values makes the decision-making difficult especially for greedy approaches. 
The setting for the IBM graph follows the node attributes shown in Figure \ref{ibm_settings}. 

Also, it is assumed that all the nodes in a given network are initially nonfunctional: $F[t_0] = V$.

% or $([2,20], [2,10], 2)$.
%\begin{itemize}
%\item util: 1-4, or 2-20
%\item demand: 1-2, or 2-10
%\item resource: 1, 2
%\end{itemize}

%In addition, 
%(will add a few sentences describing how exactly we embedded the adversarial settings.)

%\subsubsection{Hyperparameters}
%\begin{itemize}
%\item $\gamma$
%\item $\alpha$
%\end{itemize}

%\subsection{Baseline Algorithms}
%DeepPR is compared with three baseline algorithms named DP-OPT, RANDOM, and RATIO. 

%\begin{itemize}
%\item the DP optimum algorithm (Algorithm \ref{staropt})
%%\item a $u-d$ heuristic (Algorithm \ref{rminusd_heu})
%\item a ratio heuristic (Algorithm \ref{rbyd_heu})
%\item DeepPR
%\end{itemize}

\subsubsection{DQN Settings}
Our DQN consists of three fully connected layers: the input, middle, and output layer with $|V|$, 200, and $P(|V|, 2)$ neurons, respectively. The input layer receives the state vector, which represents the remaining demands, and the output layer indicates the evaluation of possible legal actions for a given state. The Rectified Linear Unit (ReLU) is used for the activation function, and the reward discount factor $\gamma$ is set to 0.6. The training of the DQN is conducted by the Adam algorithm (AdamOptimizer in TensorFlow \cite{tf}) that minimizes the estimation loss $L(\theta)$.

For exploration, $\epsilon$ is initially set to 1.0, decreasing by 0.0001 after every episode until 0.1. This encourages DeepPR to visit more diverse state-action pairs at the beginning and to reinforce its learning as it has experienced more. Additionally, $\omega_{\mathrm{RATIO}}$ is fixed to 0.5 to incorporate RATIO in exploration.

 %-------------------------------------------------------------------------------------
\begin{figure*}[t]
\centering
\begin{minipage}[t]{0.16\textwidth}
\end{minipage}\hfill
\begin{minipage}[t]{0.28\textwidth}
\centering
\includegraphics[width=\columnwidth]{\figpath ibm_ad_with_Deep.eps}
\caption{Utility under Adversarial Settings (the IBM Graph): DeepPR shows the performance similar to what it demonstrated with compliant scenarios, although RATIO suffers from the adversarial attack.\label{IB_ad_deep}}
\end{minipage}\hfill
\begin{minipage}[t]{0.47\textwidth}
\centering
\includegraphics[width=\columnwidth]{\figpath result_gnp_ad.eps}
\caption{Utility under Adversarial Settings (GNP Random Graphs): DeepPR obtains the solutions closer to the optimum, while the difference between DeepPR and RATIO increases from the compliant cases.\label{result_gnp_ad}}
\end{minipage}\hfill
\begin{minipage}[t]{0.16\textwidth}
\end{minipage}
\end{figure*}
%-------------------------------------------------------------------------------------

 \subsection{Simulation Results}
Figure \ref{lc} illustrates a sample of the learning curve of DeepPR over episodes, which are alternating sequences of states and actions from the initial network state to the fully recovered state. This sample is obtained in a GNP graph with 19 nodes, and similar curves are also observed in other graphs. Since the NNs are randomly initialized, the initial $Q$-values do not reflect the actual rewards. 
Through the update on $Q$-values and explorations, the NNs are trained to select an action that maximizes the total utility.
%Through the update on $Q$-values during explorations, the NNs, which estimate $Q$-values, are gradually trained to produce more accurate rewards. 
In the figure, the utility (total reward) that DeepPR achieves stays at approximately 725 until around the 250th episode, and after that, it continues increasing towards around 900. Because of the exploration by random actions, utility values fluctuate during the entire training period. Note that each episode takes 1.057 seconds on average in a computer with a 2.5 GHz Intel Core i5 CPU, Intel HD Graphics 4000 (1536 MB), and 8 GB memory.

Figure \ref{gnp_result} indicates a comparison among the four algorithms in terms of total utility in GNP random graphs. In smaller graphs, the utility obtained by DeepPR always matches with the theoretical optimum (DP-OPT). In theory, $Q$-learning is guaranteed to achieve the optimum by visiting each state-action pair an infinite number of times. Since it is easier to visit each state-action pair a greater number of times in graphs with fewer states and action choices, the estimation of $Q$-values seems to converge to more accurate values, which leads to the optimum. In contrast, the difference between DP-OPT and DeepPR increases in some larger graphs for the same reason. Compared to RATIO, DeepPR performs slightly better in those larger graphs. DeepPR achieves 96.6\% of the optimum on average in the graphs of size 15 to 20 nodes, while RATIO reaches 95.8\% of the optimum in the same graphs. Also, RANDOM is the worst heuristic among the four methods over all sizes of graphs and continues getting worse along with the graph size because of the increase of legal actions. 

Figure \ref{BTNA_result} shows the utility obtained by three algorithms in the BT North America graph. Here, DP-OPT is not included since it is intractable due to the number of nodes. In this practical topology, we also observed a trend similar to the results from GNP graphs. As mentioned in Section \ref{ratio_section}, all simulation results indicate that RATIO could attain the utility values close to the utility obtained by DeepPR and the optimum when input node attributes do not contain an adversarial substructure.

In contrast, the total utility obtained by RATIO is easily deteriorated by intentional attacks with the adversarial settings as shown in Figure \ref{IB_ad_deep} and \ref{result_gnp_ad}. However, DeepPR demonstrates robustness in these scenarios; i.e., DeepPR keeps achieving the solutions closer to the optimum as it does in the previous compliant cases. For example, the total utility obtained by DeepPR in the GNP graph with 23 nodes is 98.5\% of the optimum, though RATIO obtains 91.4\% of the optimum. This is because DeepPR still explores other possible state-action pairs by random exploration with probability $\epsilon(1-\omega_{\mathrm{RATIO}})$. The difference in performance between DeepPR and RATIO diverges in GNP random graphs with the adversarial settings, as shown in the figure.

%In general, RATIO performs well since it decides the recovery ordering based on the cost-effectiveness, which is used in an approximation algorithm for the set cover problem. It is known that greedy choices based on the cost-effectiveness can reach near-optimum in many cases. At the same time, the performance of DeepPR is quite interesting, when considering the fact that the RL agent does not know the meaning of states, actions, or the condition for recovery.
 
%\subsection{Evaluation Metrics}
%\begin{itemize}
%\item running time (DP vs RL vs heuristic), x-axis: \#node, y-axis: running time
%\item utility (DP vs RL vs heuristic), x-axis: \#node, y-axis: utility
%\item learning process: x-axis: episode id, y-axis: util at each episode for the 5*5 grid and BT North America with different util/demand (1-4, 1-2) and (1-20, 1-10)
%\end{itemize}
%
%\subsection{Other}
%with the same graph (each episode for the 5*5 grid) and same util/demand settings ((1-20, 1-10)), change $\gamma$ and see the difference: $\frac{\textrm{RL util}}{\textrm{opt}}$

%\newpage
%==============================================================
\section{Discussions \label{disc_section}}
The total utility achieved by DeepPR surpasses that of the other methods, since DeepPR integrates random actions with a well-behaving heuristic, RATIO to seek new experiences. This randomness prevents the proposed method from experiencing the adversarial scenarios from which RATIO suffers.

Another benefit observed is the learning speed of Deep RL. In general, it is more difficult to train a NN as a $Q$-value estimator for discrete state-action spaces than to do so for continuous spaces due to the sparser relation among neighboring points. Therefore, more episodes could be required to obtain a more accurate estimator if we adopt a sample random exploration. However, our result indicates that it is possible to speed up the learning process by integrating a well-behaving heuristic that is specific to each problem. Figure \ref{curve_diff} compares the utilities at each episode of DeepPR with the simple RANDOM exploration and the integrated exploration ($\omega_{\mathrm{RATIO}} = 0.5$). Clearly, the exploration is conducted more effectively when a heuristic is incorporated, and higher utility values are achieved in earlier episodes. 

%-------------------------------------------------------------------------------------
\def\scale{0.47}
\begin{figure}[t]
\centering
\includegraphics[width=\scale\textwidth]{\figpath 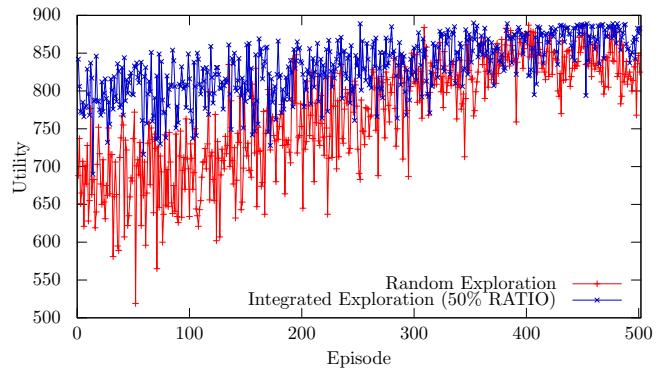}
\caption{Learning Curves with Different Exploration Methods (A GNP Random Graph): The utilization of RATIO as one of the exploration methods ($\omega_{\mathrm{RATIO}} = 0.5$) helps DeepPR to reach a better solution in earlier episodes.}
\label{curve_diff}
\end{figure}
%-------------------------------------------------------------------------------------
Our results suggest the applicability of Deep RL to a wide range of optimization problems for which some heuristic algorithms are proposed. When a heuristic---which performs well for compliant cases and suffers from some critical cases---is known for the problem, it seems obvious that the addition of some degree of randomness in the action selection helps the algorithm to receive important experiential feedback, especially from the critical cases. Deep RL, in general, can extract the characteristics of such feedbacks and remembers them for future actions. Therefore, our results imply that the integration of such a simple heuristic algorithm and the Deep RL technique would provide a general methodology to design algorithms for such optimization problems, which may outperform the existing heuristic. Additionally, from the previous discussion, the integrated exploration can be more effective than general exploration methods due to the problem-specific tuning.

%==============================================================
\section{Conclusion}
This paper discusses a progressive recovery problem of interdependent networks to maximize the total available computation utility of the networks, where a limited amount of resources arrives in a time sequence. Considering the dependency between network functions and infrastructure nodes in networks, in the problem, a node is said to be recovered when (1) the node is reachable from at least one control function and (2) the recovery resource allocation cumulatively satisfies its repairing demand, which represents the cost to fix the node itself. It is proved that the recovery problem with a general network topology always has an equivalent progressive recovery problem with a one-layered graph, which is much simpler but still NP-hard. Through a preliminary simulation result, it is also discussed that a simple heuristic algorithm called RATIO, which determines the resource allocation based on the ratio of utility and demand, can perform well when the network does not contain a specific substructure. In order to cope with the scenarios adversarial for RATIO, a deep reinforcement learning-based algorithm, DeepPR, is introduced with the exploration based on RATIO. The simulation results indicate that it achieves the near-optimal solutions in smaller real and random networks and is robust against the adversarial cases. Furthermore, it is empirically shown that the integration of RATIO and reinforcement learning improves the effectiveness of exploration of the learning. Our success in the integration suggests possible improvements of existing heuristic approaches for general optimization problems using reinforcement learning.

%%%%%%%%%%%%%%%%%%%%%%%%%%%%%%%%%%%%%%%%%
\bibliographystyle{ieeetr} 
\bibliography{ref}

%%%%%%%%%%%%%%%%%%%%%%%%%%%%%%%%%%%%%%%%%

\appendices

\section{Exponential OPT Algorithm based on Dynamic Programming \label{dpopt}}
DP-OPT calculates the optimum network-wide utility by a bottom-up dynamic programming technique, which enables us to obtain the optimum until relatively larger graphs compared to simple enumerations.

%\ifjournal
%\begin{corollary}
%Algorithm \ref{staropt} is the optimum algorithm for the progressive recovery problem with any general graph with $u(v) = 0\ (\forall v \in V_0)$, a time-invariant $r$, and \conditionA, where $C$ is a constant representing the amount of available resource.
%\end{corollary}
%\fi

%----------------------------------------------------------------------------------------
\def\rnd{\mathrm{randomInt}}
\begin{algorithm}[t]
\caption{DP-OPT$\left(G\right)$}
\label{staropt}
\begin{algorithmic}[1]
\REQUIRE A graph $G = (V, E)$, a dictionary $Z$ (size $2^{|V|}$), a dictionary $B$ (size $2^{|V|}$)\\
Note that $[V]^s$ denotes a family of $V$'s subsets of size $s$
\STATE $B[\emptyset] \gets \emptyset$
\STATE $Z[\emptyset] \gets 0$
%\IF{$|V| = 1$}
%	\STATE $A[V] \gets u(v)$, where $v \in V$
%\ENDIF
\FOR{$s \gets 1$ \textbf{to} $|V|$} \label{for_sizeofsubset}
	\FOR{$X \in [V]^s$} \label{for_allsubset}
		\STATE $q \gets - \infty$
		\STATE $X' \gets \{v_i \in X \mid \exists v_j \in V \setminus X$ s.t. $(v_i, v_j) \in E\}$
		\FOR{$v_i \in X'$} \label{for_pickanode}
			\STATE $q' \gets u(v_i) \left(1 + \ceiling{\frac{\sum_{v_j \in X \setminus \{v_i\}} d(v_j)}{C}}\right) + $ Z$\left(X \setminus \{v_i\}\right)$
			\IF{$q < q'$}
				\STATE $q \gets q'$
				\STATE $B[X] \gets v_i$ 
			\ENDIF
%			\STATE $q \gets \max\Big\{q, \dfrac{u(v_i) \cdot \sum_{v_j \in X \setminus \{v_i\}} d(v_j)}{C} + $ Z$\left(X \setminus \{v_i\}\right)\Big\}$
		\ENDFOR
		\STATE $Z[X] \gets q$
	\ENDFOR
\ENDFOR
%\STATE $Y \gets \emptyset$ \label{reconst_begin}
%\WHILE{$Y \neq V$}
%	\STATE print($B[V \setminus Y]$) 
%	\STATE $Y' \gets Y \cup B[V \setminus Y]$
%	\STATE $Y \gets Y'$
%\ENDWHILE 
\label{reconst_end}
\end{algorithmic}
\end{algorithm}
%----------------------------------------------------------------------------------------

\begin{lemma}
Algorithm \ref{staropt} is the optimum (exponential) algorithm to solve the progressive recovery problem in a general graph with \conditionA, where $C$ is a constant representing the amount of available resource. 
%$\tuple{\forall N, F = V_1, \forall d, \forall u,  \textrm{time-invariant } r}$.
\end{lemma}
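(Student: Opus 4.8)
The plan is to reduce the search over resource matrices $P$ to a search over \emph{connected recovery orders} of the failed nodes, and then to show that Algorithm~\ref{staropt} computes the best such order by subset dynamic programming. First I would invoke Theorems~\ref{optintree} and~\ref{general_are_onelayer} (built on Lemmas~\ref{concentating_is_good_in_stars}--\ref{saturation_is_bad}) to restrict attention to plans of a very special form: when $r\colon t_i\mapsto C$, an optimal plan consists only of concentrating assignments, each made to a node adjacent to an already-functional node. Consequently an optimal plan is fully described by the order $\sigma=(v_{(1)},\dots,v_{(n)})$ in which nodes become functional, and this order is \emph{connected}: each $v_{(k)}$ is adjacent to $\{s\}\cup\{v_{(1)},\dots,v_{(k-1)}\}$, where $s\in V_0$ is the seed node, which is permanently saturated (the graph is one-layered) and carries zero utility under the standing assumption $u(v\in V_0)=0$. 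Conversely, every connected order is realized by a feasible concentrating plan. Under \conditionA\ at most one node is saturated per time step, and the ``surplus to one unsaturated node'' rule of concentrating assignments wastes no capacity; so, given $\sigma$, each $v_{(k)}$ becomes functional at a determined step $\tau_k$ and stays functional through the final step $T$, whence $U_{P^*}=\max_\sigma\sum_{k=1}^{n}u(v_{(k)})\,(T-\tau_k+1)$. It then suffices to show that $Z[V]$ equals this quantity.

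The heart of the argument is the following interpretation of the table entries, which I would prove by induction on $|X|$: $Z[X]$ is the maximum, over all connected recovery orders in which every node of $V\setminus X$ is recovered before every node of $X$, of the total utility contributed by the nodes of $X$ alone. The base case $Z[\emptyset]=0$ is immediate. For the inductive step, in any such order the first node of $X$ to become functional must be adjacent to the already-functional set $\{s\}\cup(V\setminus X)$, hence must lie in $X'=\{v_i\in X\mid\exists\,v_j\in V\setminus X,\ (v_i,v_j)\in E\}$ (for $X=V$ this is just the neighbourhood of $s$). Fixing this node as $v_i$: because \conditionA\ forces the schedule to advance one node at a time and the forwarding rule leaves no gaps, recovering the remaining set $X\setminus\{v_i\}$ occupies exactly $\ceiling{\frac{\sum_{v_j\in X\setminus\{v_i\}}d(v_j)}{C}}$ further steps, so $v_i$ is functional for $1+\ceiling{\frac{\sum_{v_j\in X\setminus\{v_i\}}d(v_j)}{C}}$ steps (its recovery step together with the remaining ones). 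The residual problem on $X\setminus\{v_i\}$, now with $(V\setminus X)\cup\{v_i\}$ recovered first, contributes $Z[X\setminus\{v_i\}]$ by the induction hypothesis, and the first node may be any element of $X'$. Maximizing over $v_i\in X'$ reproduces exactly the update in the loop of line~\ref{for_pickanode} of Algorithm~\ref{staropt}, proving the claim; the array $B$ simultaneously records an optimal choice, so an optimal order can be reconstructed.

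Taking $X=V$ then gives $Z[V]=\max_\sigma\sum_k u(v_{(k)})(T-\tau_k+1)=U_{P^*}$, and the double loop over subset sizes (line~\ref{for_sizeofsubset}) and subsets (line~\ref{for_allsubset}), with the inner loop over $X'$, runs in $O(2^{|V|}\,|V|^{2})$ time, which is exponential as claimed. I expect the main obstacle to be the arithmetic bookkeeping hidden in the inductive step: one must show rigorously that under \conditionA\ the number of steps during which a node stays functional is precisely $1+\ceiling{(\text{remaining total demand})/C}$, with no off-by-one slack. This rests on making two facts precise --- that \conditionA\ guarantees at most one node is saturated per step, and that the resource-forwarding rule of concentrating assignments wastes nothing --- so that the step counts of disjoint node sets combine additively through the ceiling functions. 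A secondary, routine point is the treatment of the seed $s\in V_0$: since it requires no repair resources and has zero utility, it may be regarded as functional from $t_0$, which is exactly what makes $X'$ for $X=V$ equal the neighbourhood of $s$ and thereby correctly constrain the first recovered node.
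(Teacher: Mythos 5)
Your proposal is correct and follows essentially the same route as the paper's proof: interpret $Z[X]$ as the optimal utility contributed by the not-yet-recovered set $X$ given $V\setminus X$ already functional, restrict the first recovered node to $X'$ via the adjacency/concentration results, and recurse on $X\setminus\{v_i\}$. Your version is simply more explicit --- the formal induction on $|X|$, the treatment of the seed node in $G_0$, and the per-node contribution $u(v_i)\bigl(1+\ceiling{\sum_{v_j\in X\setminus\{v_i\}}d(v_j)/C}\bigr)$ (which matches the algorithm exactly) tighten steps the paper leaves informal.
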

\begin{proof}
\ifjournal
When \conditionA, any recovery plan recovers at most one node at each time step. The duration that a node $v_i$ is functional is the duration that the rest of nonfunctional nodes are recovered: $\ceiling{\frac{\sum_{v_j \in X \setminus \{v_i\}} d(v_j)}{C}}$, where $X$ is a set of nonfunctional nodes. Hence, the total utility to which the recovered node $v_i$ contributes until the last step is $\ceiling{\frac{u(v_i) \cdot \sum_{v_j \in X \setminus \{v_i\}} d(v_j)}{C}}$.

The remaining problem is the same problem with $X\setminus\{v_i\}$ to recover the rest of nonfunctional nodes. The problems with smaller subsets are already solved in previous loops. Thus, the algorithm can reuse the pre-calculated results stored in $Z$. Therefore, the value of $q$, which is always the maximum for subsets of the same size, reaches the optimum when $s = |V|$.

%The recovery plan $P$ is restored by traversing $B$ (line \ref{reconst_begin} - \ref{reconst_end}), and the optimum network-wide utility is $Z[V]$.

This algorithm does not depend on any assumption on specific graph topology, since it solves the problem in a logical star graph. The set $X'$, which represents a set of nodes adjacent to any functional nodes, implicitly composes the logical star graph for each time step.
%Note that this discussion holds with any tree with any number of independent nodes, because of the possible problem conversion shown in Theorem \ref{optintree}.
\end{proof}

\begin{lemma}
The complexity of Algorithm \ref{staropt} is $O((|V| + |E|) 2^{|V|})$.
\end{lemma}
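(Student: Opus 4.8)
The plan is to charge the running time to the iterations of the nested loops. The two outer loops (lines \ref{for_sizeofsubset}--\ref{for_allsubset}) together enumerate every subset $X \subseteq V$ of size at least $1$ exactly once, so the loop body is executed $\sum_{s=1}^{|V|} \binom{|V|}{s} = 2^{|V|} - 1 = O(2^{|V|})$ times; the initialization of $B[\emptyset]$ and $Z[\emptyset]$ costs $O(1)$. It therefore suffices to show that one execution of the body --- forming $X'$, looping over $v_i \in X'$ to evaluate $q'$, and updating $q$, $Z[X]$, and $B[X]$ --- runs in $O(|V| + |E|)$ time, under the standard assumption that $Z$ and $B$, being indexed by subsets of a size-$|V|$ ground set, support $O(1)$ lookup and update (the usual bitmask encoding on a RAM with word size $\Theta(|V|)$).

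First I would bound the cost of computing $X' = \{v_i \in X \mid \exists v_j \in V \setminus X,\ (v_i,v_j)\in E\}$. Scanning the adjacency lists of the vertices of $X$ and flagging a vertex as belonging to $X'$ as soon as one of its neighbors is found outside $X$ touches each edge a constant number of times, so this is $O(|V| + |E|)$, the $|V|$ term covering initialization of the membership flags. Next, I would make the loop over $v_i \in X'$ (line \ref{for_pickanode}) efficient by precomputing the single quantity $D_X := \sum_{v_j \in X} d(v_j)$ in $O(|V|)$ time before entering it; then for each $v_i$ one has $\sum_{v_j \in X\setminus\{v_i\}} d(v_j) = D_X - d(v_i)$ in $O(1)$, the ceiling term is $O(1)$, the lookup $Z[X\setminus\{v_i\}]$ is $O(1)$, and the comparison together with the possible updates to $q$, $Z[X]$, $B[X]$ is $O(1)$. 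Since $X' \subseteq V$, the loop runs $O(|V|)$ times, so it contributes $O(|V|)$.

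Combining the pieces, each execution of the loop body is $O(|V| + |E|)$, and there are $O(2^{|V|})$ of them, which gives the claimed bound $O\bigl((|V|+|E|)\,2^{|V|}\bigr)$. The only point that needs care --- and the step I expect to be the main (if minor) obstacle --- is justifying that the per-subset bookkeeping is genuinely linear rather than, say, $O(|V|^2)$: without precomputing $D_X$ the naive evaluation of each $q'$ would re-scan all of $X$, and one must also state explicitly that a subset can be complemented against $V$, have a single element removed, and be used as a dictionary key in constant time. Once this model-of-computation assumption is made precise, the bound follows.
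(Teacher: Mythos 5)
Your proof is correct and follows essentially the same route as the paper's: $O(2^{|V|})$ subsets enumerated by the outer loops, times $O(|V|+|E|)$ work per subset for computing $X'$ and running the inner loop. You are in fact more careful than the paper, which silently assumes the per-subset work is linear; your observation that one must precompute $D_X = \sum_{v_j \in X} d(v_j)$ (otherwise the inner loop costs $O(|V|^2)$ per subset) is a genuine and worthwhile refinement of the stated argument.
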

\begin{proof}
\ifjournal
The two for-loops in line \ref{for_sizeofsubset}-\ref{for_allsubset} collectively go through all subsets in the power set of $V$. Also, in line \ref{for_pickanode}, the algorithm calculates the network-wide utility by attempting to recover each node $v_i \in X$ that is adjacent to a functional node. In the worst case, the size of $X'$ is $|V|-1$, and it requires a traversal of $E$ to check the adjacency. Thus, the computation complexity is $O((|V| + |E|) 2^{|V|})$.
%DO WE NEED TO THINK ABOUT $\sum$?
%No! because it can be replaced by subtraction from the previous
\end{proof}

\fi

%\ifappendixneeded
%\section{Dr. Sen's paper}
%The work \cite{interdepprogrecov_ealier} introduces an algebraic representation of interdependent networks to solve a progressive recovery problem, which is very similar to ours. This section shows the difference in problem settings and claims that our problem is a general case of the related work.
%
%The first major difference is in the definition of demand and available resources. The relate work assumes that a node is recovered at every time step. In our problem, this can be interpreted as $r: t_i \mapsto C$ and $d: V \rightarrow C$, where $C$ is a constant.
%
%Also, the authors assume that dependency occurs only between layers; i.e., every node needs a direct connection from the other layer. In our problem, there could exist horizontal dependency, in addition to the dependency between two layers, for the indirect reachability to the other layer. 
%%Let us think about a special case of our problem, where a graph contains one independent node and has tree topology. 
%Our problem is equivalent to the problem in \cite{interdepprogrecov_ealier} when a given graph is 2-colorable, because the graph can be considered as a bipartite graph and the dependency only exists between the subsets of nodes. However, the input graph is not necessarily 2-colorable in general for our problem.
%\fi

\newpage

%=====================================================
%=====================================================
\begin{IEEEbiography}[{\includegraphics[width=1in,height=1.25in,clip,keepaspectratio]{./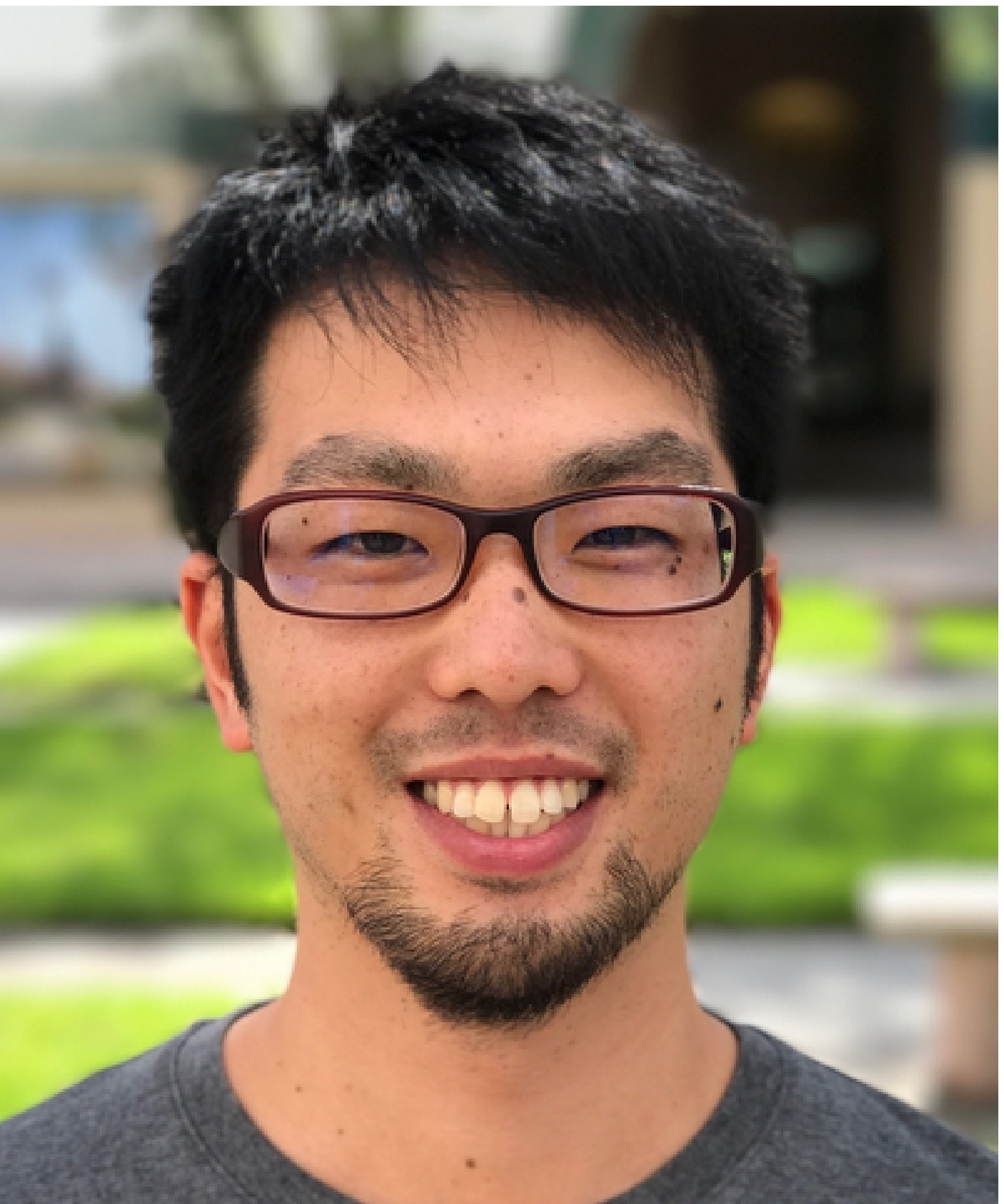}}]{Genya~Ishigaki} (GS'14) received the B.S. and M.S. degrees in Engineering from Soka University, Tokyo, Japan, in 2014 and 2016, respectively. He is currently pursuing the Ph.D. degree in Computer Science at The University of Texas at Dallas, Texas, USA. His current research interests include design and recovery problems of interdependent networks, online combinatorial optimization, and deep reinforcement learning.
\end{IEEEbiography}

\begin{IEEEbiography}[{\includegraphics[width=1in,height=1.25in,clip,keepaspectratio]{./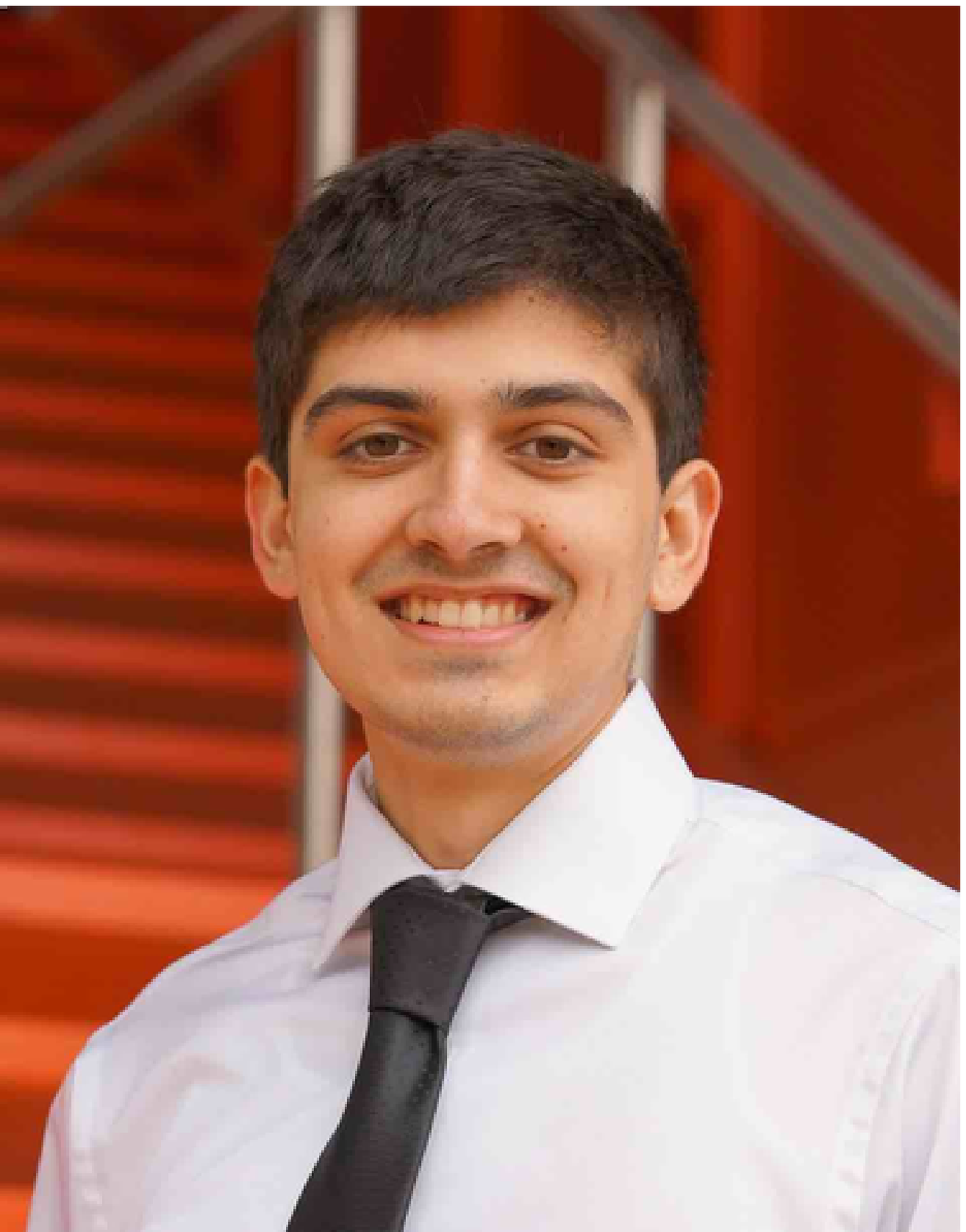}}]{Siddartha~Devic}
is currently pursuing a B.S. in Computer Science and Mathematics at The University of Texas at Dallas. He is a part of the Advanced Networks Research Laboratory, and also works with various professors on theoretical machine learning. His current research interests are graph algorithms, convex optimization, and learning theory.
\end{IEEEbiography}

\begin{IEEEbiography}[{\includegraphics[width=1in,height=1.25in,clip,keepaspectratio]{./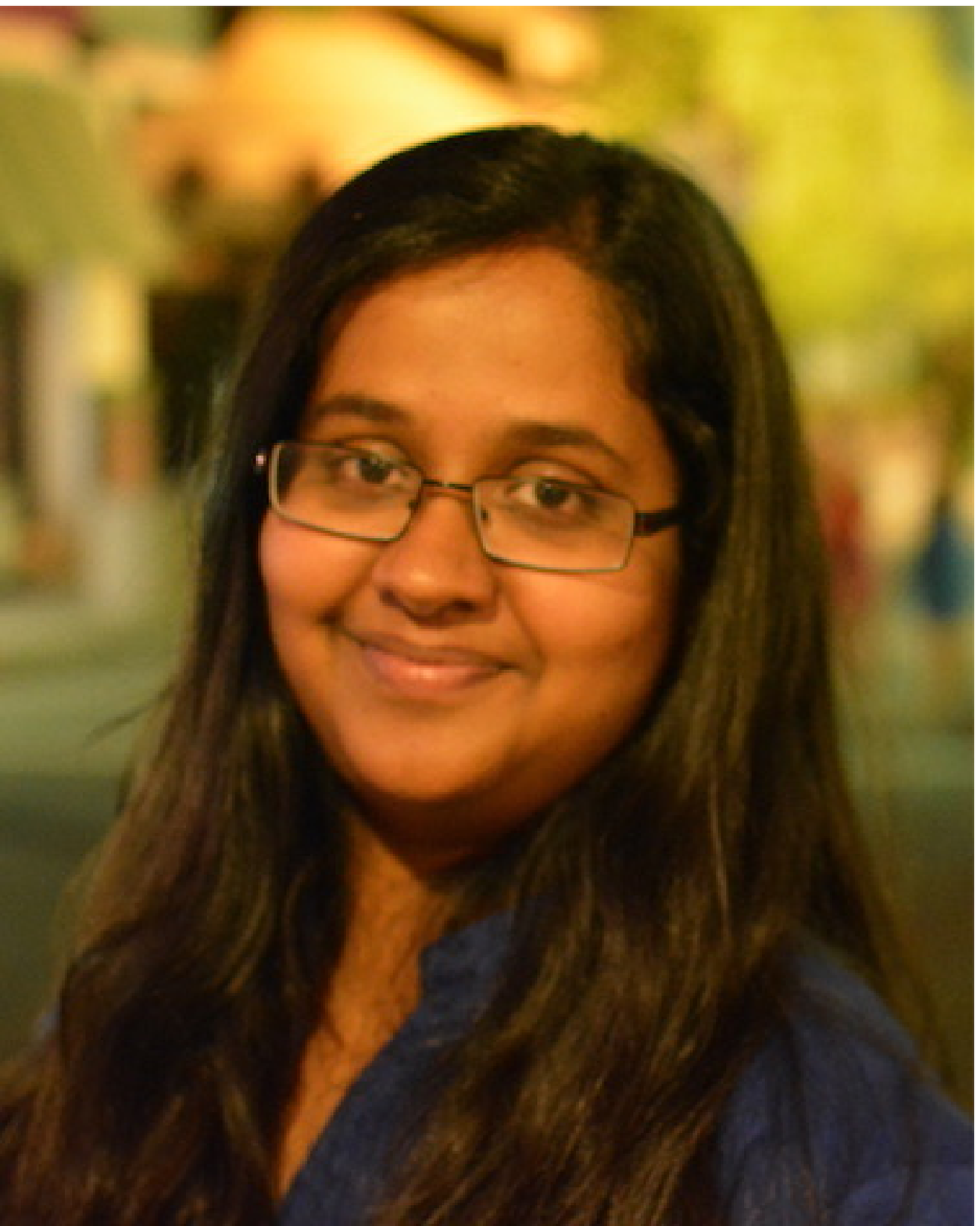}}]{Riti~Gour}
received her B.E. in Electronics and Telecommunication Engineering from C.S.V.T.U., India, in 2012, and her M.S. in Telecommunications Engineering from The University of Texas at Dallas, USA, in 2015. She is currently a PhD student in Telecommunications Engineering at UTD. Her current research is on availability of optical networks, network virtualization (slicing) and combinatorial optimization.
\end{IEEEbiography}

\begin{IEEEbiography}[{\includegraphics[width=1in,height=1.25in,clip,keepaspectratio]{./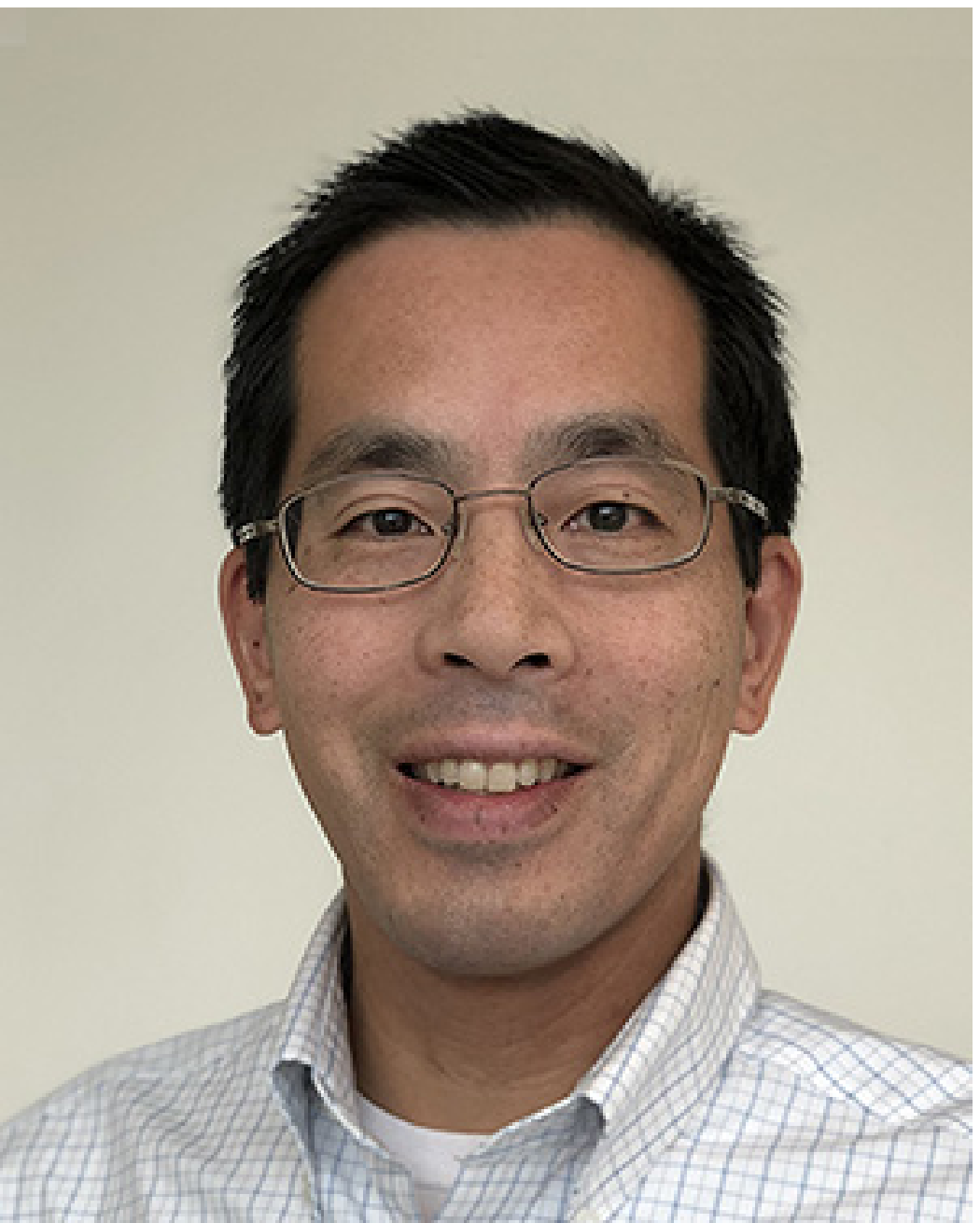}}]{Jason~P.~Jue} (M'99-SM'04) received the B.S. degree in Electrical Engineering and Computer Science from the University of California, Berkeley in 1990, the M.S. degree in Electrical Engineering from the University of California, Los Angeles in 1991, and the Ph.D. degree in Computer Engineering from the University of California, Davis in 1999. He is currently a Professor in the Department of Computer Science at the University of Texas at Dallas. His current research interests include optical networks and network survivability.
\end{IEEEbiography}

\end{document}